\newcommand{\indexfonction}[1]{\index{#1@\texttt{#1}}}
\tikzstyle{var}=[ellipse,thick,draw=black,minimum size=1.2cm]
\theoremstyle{definition}
\newtheorem{proposition}{Proposition}
\newtheorem{corollary}[proposition]{Corollary}
\newcommand{\greg}[1]{#1}
\newcommand{\vitto}[1]{#1}
\newcommand{\PP}{\mathbb{P}}
\def\independenT#1#2{\mathrel{\setbox0\hbox{$#1#2$}%
\copy0\kern-\wd0\mkern4mu\box0}}
\title{Hidden Markov Model Applications in Change-Point Analysis} 
\author{T. M. Luong, V. Perduca, and G. Nuel\\
MAP5 Laboratory, Paris Descartes University, Paris, France}
\date{}
\begin{document}

\maketitle

\section{Introduction}\label{intro}

The detection of change-points in heterogeneous sequences is a statistical challenge with many applications in fields such as finance, reliability, signal analysis, neurosciences and biology. A wide variety of literature exists for finding an ideal set of change-points for characterizing the data, in terms of both the number of change-points in a given sequence and their corresponding locations.

A conventional expression of the change-point problem is as follows: given a dataset $X_{1:n}=(X_1,X_2,\ldots,X_n)$ of real-valued observations, to find an ideal set of $K$ non-overlapping intervals where the observations are homogeneous within each. For $K$ segments, the change-point model expresses the distribution of $X$ given a segmentation $S_{1:n}=(S_1,\ldots,S_n) \in \mathcal{M}_K$ as:

\begin{equation}\label{eq:themodel}
\PP_{\theta}(X_{1:n} | S_{1:n})=\prod_{i=1}^{n} \beta_{S_i}\left(X_i\right) = \prod _{s=1}^K \prod_{i,S_i=s} \beta_s\left(X_i\right)
\end{equation}
where $\beta_s$ is the emission distribution of the observed data in segment $s$, $\theta=(\theta_1,\ldots,\theta_K)$ is the set of model parameters, $S_i$ is the segment index at position $i$, and $\mathcal{M}_K$ is the set of all possible combinations of $S$ for fixed $K\geqslant 2$ number of segments \greg{(for example, $S_{1:7}=1122233$ corresponds to $n=7$ observations divided into $K=3$ segments with two change-points: the first one between positions $2$ and $3$, and the second one between positions $5$ and $6$)}. We refer to this approach to the change-point model as \textit{segment-based}, later we discuss another common approach to change-point detection that we refer to as \textit{level-based}. For simplicity, denote $\PP_{\theta}(-)$ as $\PP(-)$ when the set of parameters is clear according to the context. 

\subsection{Current HMM algorithms in change-point analysis}

The hidden Markov model \citep[HMM, see][]{rabiner89} is a commonly used tool for inference in change-point analysis. While HMMs are a conventional feature in change-point methods in bioinformatics \citep{fridlyand04}, it is also widely used in diverse research areas such as speech recognition \citep{rabiner89}, facial recognition \citep{nefian98}, financial time series \citep{ge00}, inflation models \citep{chopin04}, music classification \citep{kimber97}, brain imaging \citep{zhang01}, climate research \citep{hughes99}, and network security \citep{cho03}.

\greg{Change-point analysis can be seen as a HMM where the data are the observations and the unknown segmentation the hidden states.} HMM adaptations can therefore identify change-points by observations where a switch in hidden states is most likely to occur. A convenient feature of the HMM approaches is in inferential procedures, such as estimating the posterior marginal state distribution $\PP(S|X)$. An efficient computation in linear time of this quantity uses classical forward-backward recursions \citep{durbin98}. The HMM estimation in mixture and change-point problems can be accomplished through the expectation-maximization (E-M) algorithm \citep{dempster77,bilmes98}, and MCMC methods, including reversible jump MCMC \citep{green95}, Gibbs sampling \citep{chib98}, and recursive algorithms \citep{scott02}. Other approaches which use sampling to characterize the uncertainty in a HMM given the data include particle filtering \citep{fearnhead03} and MCMC \citep{guha08}.  A summary of the inferential procedures involved in HMM estimation can be found in \citet{cappe05}. In addition to the algorithm of linear complexity presented in this chapter, two other exact algorithms for estimating posterior distributions include a frequentist \citep{guedon08} and Bayesian \citep{rigaill11} approach which use modified versions of the forward-backward algorithm.

Many schemes for estimating the number of hidden states in the HMM have also been investigated, which in general include several penalization criteria. These include a modified Bayes Information Criterion \citep{zhang07} to adjust for the number of states in previously fitted HMM as well as adaptive methods \citep{lavielle05,picard05} for estimating the location and number of change-points. 

Section \ref{cpinhmm} presents two different frameworks for applying HMM to change-point models, Section \ref{inference} provides a summary of two procedures for inference in change-point analysis, Section \ref{examples} provides two examples of the HMM methods on available data sets, Section \ref{genomics} provides a short summary of HMM and other change-point methods for current genomics studies, and Section \ref{conclusion} provides a short conclusion and discussion.

\section{The level- and segment- based change-point models as particular cases of the HMM formalisms}\label{cpinhmm}

We present a unifying HMM approach that can be adapted to many different approaches to the change-point problem. The known properties and current algorithms of the HMM allow for the efficient estimation of many quantities of interest. In turn, the results from the HMM may be used for model selection, or provide additional information about a given segmentation, such as the uncertainty of the change-points. The HMM framework also permits various extensions of the change-point model for further practical use. 

A typical HMM is defined by the joint probability distribution
\begin{equation}
\label{hmm}
\PP(X_{1:n},S_{1:n}) =\PP(S_1)\PP(X_1|S_1)\prod_{i=2}^n\PP(S_i|S_{i-1})\PP(X_i|S_i)
\end{equation}
where $n$ is the total number of observations, $X_{1:n}=(X_1,\ldots,X_n)$ is the vector of all the observation variables, $S_{1:n}=(S_1,\ldots,S_n)$ is the vector of all the hidden variables. Figure~\ref{fig:hmm} provides a simple example of the dependencies among variables. 

\begin{figure}
\centering{
\begin{tikzpicture}[>=latex,text height=1.5ex,text depth=0.25ex,scale=0.8, transform shape] 
   \draw (0,0) node (X1) [var] {$X_1$};
   \draw (2,0) node (X2) [var] {$X_2$};
   \draw (4,0) node (X3) [var] {$X_3$};
   \draw (6,0) node (X4) [var] {$X_4$};
   \draw (8,0) node (X5) [var] {$X_5$};
   \draw (0,2) node (S1) [var] {$S_1$};
   \draw (2,2) node (S2) [var] {$S_2$};
   \draw (4,2) node (S3) [var] {$S_3$};
   \draw (6,2) node (S4) [var] {$S_4$};
   \draw (8,2) node (S5) [var] {$S_5$};
   \path[->]
       (S1) edge[thick] (X1)
       (S2) edge[thick] (X2)
       (S3) edge[thick] (X3)
       (S4) edge[thick] (X4)
       (S5) edge[thick] (X5)
       (S1) edge[thick] (S2)
       (S2) edge[thick] (S3)
       (S3) edge[thick] (S4)
       (S4) edge[thick] (S5)
       ;
\end{tikzpicture}
\caption{HMM topology with $n=5$. For $i=1\ldots 5$, $S_i$ are the hidden states, and $X_i$ are the observed states.}
\label{fig:hmm}}
\end{figure}
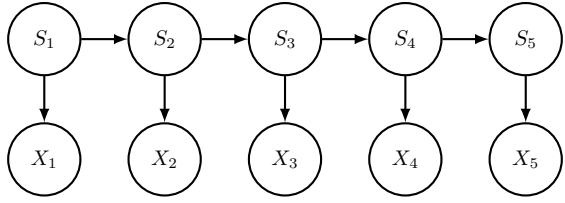
For \emph{homogeneous} HMMs define $\PP(S_1=s)=\mu(s)$, $\PP(S_i=s|S_{i-1}=r)=\alpha(r,s)$ for all $i=2,\ldots,n$ and $\PP(X_i=x|S_i=s;\theta)=\beta_s(x)$ for all $i=1,\ldots,n$. 

Given a \emph{evidence}, or some prior knowledge of the states of some variables, the computation of the posterior probabilities of the hidden variables is an important aspect of Hidden Markov modeling. For all observations $i=1,\ldots,n$, we introduce the formal notion of \emph{evidence} by considering $\mathcal{X}_i$ and $\mathcal{S}_i$, which are subsets of the two sets of all possible outcomes of $X_i$ and $S_i$, respectively.  The evidence is
\begin{equation}
\label{eq:ev}
\mathcal{E} = \{X_i\in\mathcal{X}_i,S_i\in\mathcal{S}_i,\, \mbox{ for all } i=1,\ldots,n.\}
\end{equation}
This evidence, which can comprise both observed information $X_i$ and/or known prior information $S_i$, provides a constraint on the set of posterior distributions. The unconstrained case occurs when $\mathcal{X}_i$ and $\mathcal{S}_i$ both contain the set of all possible states for $X_i$ and $S_i$, respectively, for all $i=1,\ldots,n$. 

Depending on the definition of the hidden state variables $S$, their corresponding their probability distributions \vitto{and the available evidence}, the HMM framework allows the definition of level- and segment-based models as follows. 

\subsection{Level-based model\vitto{: the standard evidence}}

In the level-based model, the hidden state $S_i$ pertains to the level (or underlying distribution) of observation $X_i$. This is an appropriate model when underlying properties can be shared between observations in different, non-adjacent segments. This HMM can be defined similarly to the classical HMM segmentation models by choosing the finite set of $L\geqslant 1$ levels, with $S \in \{1,2,\ldots,L\}^n$. With this level-based approach $K\geqslant L$, and transitions are possible between any pair of states.

Because all observation variables $X_i$ are observed and there are no constraints on the hidden states, the evidence is 
\begin{equation}
\label{eq:ev_l}
\mathcal{E}^\mathcal{L}=\{X_1=x_1,\ldots,X_n=x_n\}=\{X_{1:n}=x_{1:n}\}.
\end{equation}
This is the standard evidence usually found in most applications.

The transition matrix between hidden states can be assumed to be homogeneous, and is often parsimoniously parametrized. For example, consider:
\begin{equation}
\alpha(r,s)=\left\{
\begin{array}{ll}
1-\eta_r & \text{if $s=r$}\\
\frac{\eta_r}{L-1} & \text{if $s\neq r$}
\end{array}
\right.
\label{eq:trans}
\end{equation}
which uses only $L$ free parameters.

\subsection{Segment-based model}

In the segment-based model, the objective is to find the best partitioning $S \in \mathcal{M}_K$ of the data into $K$ non-overlapping intervals, where the hidden state $S_i$ pertains to the segment index of observation $X_i$. 

One set of constraints \citep{luong12} that allows the HMM to correspond \textit{exactly} to the above segment-based model defined in Equation~(\ref{eq:themodel}) is 
\begin{equation}
\label{eq:ev_s}
\mathcal{E}^\mathcal{S}=\{S_1=1,S_n=K,X_{1:n}=x_{1:n}\}
\end{equation}
where the transition only permits increments of $0$ or $+1$ of the segment index $S_i$, where $K$ is the fixed total number of segments. 

In order to obtain a uniform prior distribution of $S_{1:n}$ over all possible segmentations with $K$ segments $\mathcal{M}_{K}$, define the transition matrix over the state space $\{1,\ldots,K,K+1\}$ (where $K+1$ is a ``junk'' state) as follows:
$$
\alpha(r,s)=\left\{
\begin{array}{ll}
1-\eta & \text{if $r \leqslant K$ and $s=r$}\\
\eta & \text{if $r \leqslant K$ and $s=r+1$}\\
1 & \text{if $s=r=K+1$}\\
0 & \text{otherwise}
\end{array}
\right.
$$
where $\eta \in ]0,1[$ is a fixed number. Note that the particular value of $\eta$ does affect $\mathbb{P}(S)$ but has no effect whatsoever on $\mathbb{P}(S|\mathcal{E}^\mathcal{S})$. An arbitrary choice of $\eta=0.5$ is sufficient for practical computations.

\section{Inference in level- and segment-based HMMs}\label{inference}

\subsection{Variable elimination}
The forward-backward algorithm \citep{rabiner89,durbin98} efficiently solves many inference problems in HMMs;  before going into its technical details we illustrate the motivation and the very simple ideas behind it. 

Let us consider a HMM of length $n$ in which each hidden variable has $K$ possible states. Suppose we observe the standard evidence $\mathcal{E}=\{X_{1:n}=x_{1:n}\}$. \vitto{This is the specific expression for evidence~(\ref{eq:ev_l}) found in the level-based model; similar expressions also apply to the evidence in the segment-based model~(\ref{eq:ev_s}) and for the most general form of evidence~(\ref{eq:ev}).} 

An important problem in inference is to estimate the probability of this evidence, which can be accomplished by summing the joint distribution $\PP(S_{1:n},X_{1:n}=x_{1:n})$ over all the hidden variables:
$$
\PP(X_{1:n}=x_{1:n}) =  \sum_{S_1,\ldots,S_n} \PP(S_1,\ldots,S_n,X_{1:n}=x_{1:n}).
$$
In a naive approach, the first step would be to evaluate the joint probability for each possible value of $(S_1,\ldots,S_n)$ and then perform the summations explicitly. However this is a highly inefficient method as its total cost is $O(K^n)$. 

The exponential blowup is addressed by observing that the factors in the joint distribution~(\ref{hmm}) depend \vitto{each} on a small number of variables. A much more efficient algorithm is to evaluate expressions depending on these factors once and then cache the results, which avoids generating them multiple times. The basic tools which accomplish this are the factorization given in Eq.~(\ref{hmm}) and the distributive property. 

For example, in the case $n=3$:
\begin{multline}
\label{back_elim}
\PP(X_{1:n}=x_{1:n})=\\
\sum_{S_1}\PP(S_1)\PP(X_1=x_1|S_1)\underbrace{\sum_{S_2}\PP(S_2|S_1)\PP(X_2=x_2|S_2)\underbrace{\sum_{S_3}\PP(S_3|S_2)\PP(X_3=x_3|S_3)}_{B_2(S_2)}}_{B_1(S_1)}.       
\end{multline}

In practice, we choose an ordering of the hidden variables (here the \emph{backward} order $S_3<S_2<S_1$) and then rearrange all factors in order, so that all the factors depending on $S_3$ are the furthest on the right. Then all remaining factors depending on $S_2$ are placed to the left, and similarly for $S_1$. 
This makes it possible to eliminate one variable after another, according to the initial order. For each eliminated variable, we obtain a quantity which we cache and use in order to eliminate the ensuing variable. These quantities are also called \emph{messages} and the outlined procedure provides a recursive method to compute them. In the case of the backward ordering, the messages are called \emph{backward quantities}.

The computation of $B_2(S_2)$ for all the values of $S_2$ requires the sum of $K$ terms (one for each value of $S_3$); thus resulting in $O(K^2)$ operations. Similarly, computing $B_1$ requires $O(K^2)$ operations. Because there are $n$ hidden variables the resulting total cost is $O(nK^2)$ \greg{which is a dramatic improvement over the naive $O(K^n)$ complexity}. 

A central aspect of variable elimination is the initial ordering of these hidden variables. For instance, consider the elimination order $S_2<S_3<S_1$, then: 
\begin{multline*}
\PP(X_{1:n}=x_{1:n})=\\
\sum_{S_1}\PP(S_1)\PP(X_1=x_1|S_1)\underbrace{\sum_{S_3}\PP(X_3=x_3|S_3)\underbrace{\sum_{S_2}\PP(S_3|S_2)\PP(S_2|S_1)\PP(X_2=x_2|S_2)}_{C(S_3,S_1)}}_{D(S_1)}.       
\end{multline*}
The cost of eliminating $S_2$ is $O(K^3)$, with $K$ terms summed for each value of the pair $(S_3,S_1)$. As a result, the resulting complexity is $O(nK^3)$. 

Another ordering which leads to a $O(nK^2)$ cost is the \emph{forward} ordering $S_1<S_2<S_3$: 
\begin{multline*}
\PP(X_{1:n}=x_{1:n})=\\
\sum_{S_3}\PP(X_3=x_3|S_3)\underbrace{ \sum_{S_2}\PP(S_3|S_2)\PP(X_2=x_2|S_2)\underbrace{\sum_{S_1}\PP(S_2|S_1)\PP(X_1=x_1|S_1)\PP(S_1)}_{E_2(S_2)}}_{E_3(S_3)}. 
\end{multline*}

$E_2(S_2)$ and $E_3(S_3)$ are closely related to the \emph{forward} quantities defined in Section~\ref{subsec:F/B}: 
having defined $F_1(S_1):= \PP(X_1=x_1|S_1)\PP(S_1)$:
$$
E_2(S_2)=\sum_{S_1}\PP(S_2|S_1)F_1(S_1),
$$ 
and 
$$
F_2(S_2) :=\PP(X_2=x_2|S_2)E_2(S_2)=\PP(X_2=x_2|S_2)\sum_{S_1}\PP(S_2|S_1)F_1(S_1)
$$
and 
$$
F_3(S_3) :=\PP(X_3=x_3|S_3)E_3(S_3)=\PP(X_3=x_3|S_3)\sum_{S_2}\PP(S_3|S_2)F_2(S_2).
$$ 

At this point there are two alternative formulae for solving our initial inference problem based on the recursive forward and backward quantities:
$$
\PP(X_{1:n}=x_{1:n}) = \sum_{S_3}F_3(S_3),  
$$
and also, from Eq.~(\ref{back_elim}):
$$
\PP(X_{1:n}=x_{1:n}) = \sum_{S_1}F_1(S_1)B_1(S_1).
$$

The next section explains the use of forward and backward quantities to compute the posterior probabilities $\PP(S_i|X_{1:n}=x_{1:n})$ and other useful distributions. For simplicity we will consider homogeneous HMMs. 

\subsection{Forward-backward algorithm} 
\subsubsection{Level-based model\vitto{: the standard forward-backward algorithm}}\label{subsec:F/B}



In the level-based case, the evidence~(\ref{eq:ev_l}) is the standard evidence found in many HMMs applications. 
\vitto{The inference problem is then solved by the classical recursive formulae used for most applications. }

Define the forward and backward quantities:
$$
F_i^\mathcal{L}(s):=\PP(S_i=s,X_{1:i}=x_{1:i})
$$
and 
$$
B_i^\mathcal{L}(s):=\PP(X_{i+1:n}=x_{i+1:n}|S_i=s),
$$ 
for all $i \in \{1,\ldots,n\}$ under the convention that $B_n^\mathcal{L} \equiv 1$. These probability distributions are \vitto{exactly} the standard forward-backward quantities found in most applications.

The recursive computation of the forward and backward quantities requires the following results: 
\begin{proposition}
\label{prop:fund_results}
For all $i=1,\ldots,n$ and for each value $s$ of the hidden states:
\begin{equation}
\label{eq:separator}
\PP(S_i=s,\mathcal{E}^\mathcal{L})=F_i^\mathcal{L}(s)B_i^\mathcal{L}(s),
\end{equation}
where $\mathcal{E}^\mathcal{L}$ is defined in~(\ref{eq:ev_l}). Moreover for all $i \in \{2,\ldots,n \}$, for each pair $(r,s)$ and for each observation $x$:
\begin{equation}
\label{eq:clique}
\PP(S_{i-1}=r,S_{i}=s,\mathcal{E}^\mathcal{L})=F^\mathcal{L}_{i-1}(r)\alpha(r,s)\beta_{s}(x)B_i^\mathcal{L}(s)
\end{equation}
\end{proposition}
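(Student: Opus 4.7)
The plan is to derive both identities directly from the factorization~(\ref{hmm}) by partitioning the product of conditional probabilities into a ``past'' block and a ``future'' block that share only the variable $S_i$ (or the pair $S_{i-1}, S_i$ for the second identity). Because every factor in~(\ref{hmm}) involves at most two consecutive hidden states, cutting the chain at index $i$ gives a clean separation of terms, and the marginalization over the hidden states naturally splits into two independent sums.

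For~(\ref{eq:separator}) I would start from
$$
\PP(S_i=s, \mathcal{E}^\mathcal{L}) = \sum_{S_{1:i-1}} \sum_{S_{i+1:n}} \PP(S_{1:n}, X_{1:n}=x_{1:n}) \bigg|_{S_i=s},
$$
substitute~(\ref{hmm}), and regroup the factors indexed by $j \leq i$ (which involve only $S_{1:i}$ and $X_{1:i}$) with the factors indexed by $j \geq i+1$ (which involve only $S_{i:n}$ and $X_{i+1:n}$). Since $S_i=s$ is held fixed, the double sum factorizes into a product. The past sum is exactly $F_i^\mathcal{L}(s)$ by definition. For the future sum, I would note that $\sum_{S_{i+1:n}} \prod_{j=i+1}^n \PP(S_j\mid S_{j-1}) \PP(X_j=x_j\mid S_j)$, with $S_i=s$ fixed, is precisely $\PP(X_{i+1:n}=x_{i+1:n}\mid S_i=s) = B_i^\mathcal{L}(s)$, with the convention that the empty product at $i=n$ equals $1$.

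For~(\ref{eq:clique}) I would apply the same strategy but cut the chain at two adjacent positions, fixing both $S_{i-1}=r$ and $S_i=s$. The factorization~(\ref{hmm}) then isolates three groups of factors: those involving indices $j \leq i-1$, the single pair $\PP(S_i \mid S_{i-1})\PP(X_i \mid S_i) = \alpha(r,s)\beta_s(x)$, and those involving indices $j \geq i+1$. Summing over $S_{1:i-2}$ with $X_{1:i-1}=x_{1:i-1}$ fixed produces $F_{i-1}^\mathcal{L}(r)$, the middle factor contributes $\alpha(r,s)\beta_s(x)$, and summing over $S_{i+1:n}$ with $X_{i+1:n}=x_{i+1:n}$ fixed yields $B_i^\mathcal{L}(s)$, exactly as in the previous step.

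The only delicate point, and the closest thing to an obstacle, is verifying that the ``future'' sum really produces a conditional probability rather than a joint one: formally one wants to check that summing $\prod_{j=i+1}^n \PP(S_j\mid S_{j-1})\PP(X_j=x_j\mid S_j)$ over $S_{i+1:n}$ equals $\PP(X_{i+1:n}=x_{i+1:n}\mid S_i=s)$. This follows by writing $\PP(X_{i+1:n}=x_{i+1:n}, S_i=s)$ as a full marginalization over $S_{1:i-1}$ and $X_{1:i}$, and observing that summing emissions over all unobserved $X_j$ and transitions over all unobserved $S_j$ in the past block collapses it to $\PP(S_i=s)$, which cancels against the conditioning event. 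Handling the boundary cases $i=1$, $i=n$ and $i=2$ is then a matter of checking that empty products and sums give the expected unit values consistent with the convention $B_n^\mathcal{L}\equiv 1$.
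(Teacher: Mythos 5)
Your proof is correct, but it takes a different route from the paper's. The paper proves~(\ref{eq:separator}) in two lines: it applies the chain rule to write $\PP(S_i=s,X_{1:i}=x_{1:i},X_{i+1:n}=x_{i+1:n})$ as $\PP(S_i=s,X_{1:i}=x_{1:i})\,\PP(X_{i+1:n}=x_{i+1:n}\mid S_i=s,X_{1:i}=x_{1:i})$ and then invokes the HMM conditional independence $X_{i+1:n}\independent X_{1:i}\mid S_i$ to drop the extra conditioning; Eq.~(\ref{eq:clique}) is handled the same way with one more chain-rule step isolating $\alpha(r,s)\beta_s(x_i)$. You instead work algebraically from the factorization~(\ref{hmm}): you marginalize the joint over the unfixed hidden states and use the distributive law to split the sum into a past block (giving $F_i^\mathcal{L}(s)$) and a future block (giving $B_i^\mathcal{L}(s)$). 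The two arguments are equivalent in content --- your ``delicate point'' paragraph, where you verify that the future sum equals the conditional probability $\PP(X_{i+1:n}=x_{i+1:n}\mid S_i=s)$, is precisely a proof of the conditional independence that the paper takes as given. What your version buys is self-containedness and continuity with the variable-elimination discussion of Section~\ref{inference} (it is the same sum-splitting mechanism made rigorous); what the paper's version buys is brevity and a form that generalizes immediately to other evidence sets where the relevant conditional independencies are read off the graphical model. Both are complete; your treatment of the boundary cases via empty products matching $B_n^\mathcal{L}\equiv1$ is also consistent with the paper's convention.
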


\begin{proof}
We start by proving Eq.~(\ref{eq:separator}), by applying the chain rule and the conditional independence of the events $\{X_{i+1:n}=x_{i+1:n}\}$ and $\{X_{1:i}=x_{1:i}\}$ given $\{S_i=s\}$:
\begin{multline*}
\PP(S_i=s,\mathcal{E}^\mathcal{L})=\PP(S_i=s,X_{1:i}=x_{1:i},X_{i+1:n}=x_{i+1:n})=\\
\PP(S_i=s,X_{1:i}=x_{1:i})\PP(X_{i+1:n}=x_{i+1:n}|S_i=s,X_{1:i}=x_{1:i})=\\
\PP(S_i=s,X_{1:i}=x_{1:i})\PP(X_{i+1:n}=x_{i+1:n}|S_i=s).
\end{multline*}

Similarly for Equation~(\ref{eq:clique}):
\begin{multline*}
\PP(S_{i-1}=r,S_i=s,\mathcal{E}^\mathcal{L})=\PP(S_{i-1}=r,X_{1:i-1}=x_{1:i-1},S_i=s,X_{i}=x_i,X_{i+1:n}=x_{i+1:n})=\\
\PP(S_{i-1}=r,X_{1:i-1}=x_{1:i-1})\PP(S_{i}=s|S_{i-1}=r)\times\\
\PP(X_i=x_i|S_i=s)\PP(X_{i+1:n}=x_{i+1:n}|S_i=s).
\end{multline*}
\end{proof}

This proposition establishes all the classical results for inference in HMMs.

\begin{corollary}[Forward and backward recursions]
The forward quantities can be computed iteratively starting from $F_1^\mathcal{L}(s)=\mu(s)\beta_s(x_1)$ for all $i=2,\ldots,n$ with 
\begin{equation}
\label{eq:forw}
F_{i}^\mathcal{L}(s) =\sum_{r}F_{i-1}^\mathcal{L}(r)\alpha(r,s)\beta_s(x_i).
\end{equation}
The backward quantities can be computed recursively from $B_n^\mathcal{L}\equiv1$ for all $i=n-1,\ldots,1$ with 
\begin{equation}
\label{eq:back}
B_{i-1}^\mathcal{L}(r)=\sum_{s}\alpha(r,s)\beta_s(x_i)B_i^\mathcal{L}(s).
\end{equation}
\end{corollary}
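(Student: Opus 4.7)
The plan is to derive both recursions by unfolding the definitions of $F_i^\mathcal{L}$ and $B_i^\mathcal{L}$ and then applying the same chain rule plus conditional independence arguments that were used in the proof of Proposition~\ref{prop:fund_results}. Essentially, the forward recursion corresponds to marginalizing out $S_{i-1}$ in Equation~(\ref{eq:clique}) (stripped of its backward factor), while the backward recursion corresponds to marginalizing out $S_i$. So the work is purely a routine bookkeeping of conditional probabilities on the HMM factorization~(\ref{hmm}).

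For the forward quantities, I would start with the base case: by definition $F_1^\mathcal{L}(s)=\PP(S_1=s,X_1=x_1)=\PP(S_1=s)\PP(X_1=x_1|S_1=s)=\mu(s)\beta_s(x_1)$. For $i\geqslant 2$, I would introduce $S_{i-1}$ by the law of total probability:
$$
F_i^\mathcal{L}(s)=\PP(S_i=s,X_{1:i}=x_{1:i})=\sum_{r}\PP(S_{i-1}=r,S_i=s,X_{1:i}=x_{1:i}).
$$
Then I would factorize the summand by the chain rule, using the conditional independence of $X_i$ from $X_{1:i-1}$ given $S_i$, and the Markov property of $S_i$ given $S_{i-1}$:
$$
\PP(S_{i-1}=r,S_i=s,X_{1:i}=x_{1:i})=\PP(S_{i-1}=r,X_{1:i-1}=x_{1:i-1})\,\alpha(r,s)\,\beta_s(x_i),
$$
which, after recognizing the first factor as $F_{i-1}^\mathcal{L}(r)$, yields Equation~(\ref{eq:forw}).

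For the backward quantities, the base case $B_n^\mathcal{L}\equiv 1$ is the stated convention. For $i\leqslant n$, the recursion is obtained by writing
$$
B_{i-1}^\mathcal{L}(r)=\PP(X_{i:n}=x_{i:n}|S_{i-1}=r)=\sum_{s}\PP(S_i=s,X_i=x_i,X_{i+1:n}=x_{i+1:n}|S_{i-1}=r),
$$
and factorizing each summand via $\PP(S_i=s|S_{i-1}=r)=\alpha(r,s)$, the emission $\PP(X_i=x_i|S_i=s)=\beta_s(x_i)$, and the conditional independence of $X_{i+1:n}$ from $(S_{i-1},X_i)$ given $S_i$, which replaces the remaining term by $B_i^\mathcal{L}(s)$.

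There is no serious obstacle here: the statement is essentially an immediate reformulation of Proposition~\ref{prop:fund_results} once we sum out the appropriate variable at each end of the factorization. The only care point is to invoke the correct conditional independence at each step (the $d$-separation properties visible in Figure~\ref{fig:hmm}); once these are written down, both recursions drop out algebraically.
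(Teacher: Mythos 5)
Your proof is correct, and the underlying computation is the same chain-rule-plus-conditional-independence factorization that drives the whole section; the packaging, however, differs slightly from the paper's. The paper proves the corollary by applying Proposition~\ref{prop:fund_results} to both sides of the total-probability identity
$\PP(S_i=s,\mathcal{E}^\mathcal{L})=\sum_{r}\PP(S_{i-1}=r,S_i=s,\mathcal{E}^\mathcal{L})$,
which turns it into $F_i^\mathcal{L}(s)B_i^\mathcal{L}(s)=\sum_r F_{i-1}^\mathcal{L}(r)\alpha(r,s)\beta_s(x_i)B_i^\mathcal{L}(s)$ and then implicitly cancels the common factor $B_i^\mathcal{L}(s)$. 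You instead work directly from the definitions of $F_i^\mathcal{L}$ and $B_i^\mathcal{L}$, marginalizing over $S_{i-1}$ (resp.\ $S_i$) and factorizing each summand; this is exactly Equation~(\ref{eq:clique}) with the backward (resp.\ forward) factor stripped off, as you observe. What your route buys is that it never divides by $B_i^\mathcal{L}(s)$, so it is valid even at states where the backward quantity vanishes, and it makes the two recursions symmetric and self-contained; what the paper's route buys is brevity, since everything is delegated to the already-proven proposition. Both are legitimate, and your derivation of the base case and of each conditional-independence step is accurate.
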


\begin{proof}
In order to obtain the forward Equation~(\ref{eq:forw}), apply Equations~(\ref{eq:separator}) and~(\ref{eq:clique}) respectively to the right and left hand side of
$$
\PP(S_i=s,\mathcal{E}^\mathcal{L})=\sum_{r}\PP(S_{i-1}=r,S_i=s,\mathcal{E}^\mathcal{L}).
$$
A similar argument holds for the backward Equation~(\ref{eq:back}).
\end{proof}

The following useful result is a straightforward consequence of Proposition~\ref{prop:fund_results}:
\begin{corollary}[Posterior probabilities]
For each $i=1,\ldots,n$, the posterior state probabilities are
$$
\PP(S_i=s|\mathcal{E}^\mathcal{L})=\frac{F_i^\mathcal{L}(s)B_i^\mathcal{L}(s)}{\PP(\mathcal{E}^\mathcal{L})}\label{state}
\greg{
\quad\text{and}\quad
}
\greg{
\PP(S_{i-1}=r,S_i=s|\mathcal{E}^\mathcal{L})=
\frac{F_{i-1}^\mathcal{L}(r)\alpha(r,s)\beta_s(x_i)B_i^\mathcal{L}(s)}{\PP(\mathcal{E}^\mathcal{L})}
}
$$ 
where the probability of the evidence is 
$$
\PP(\mathcal{E}^\mathcal{L})=\sum_s F_i^\mathcal{L}(s)B_i^\mathcal{L}(s),
$$
for an arbitrary fixed $i\in\{1,\ldots,n\}$; in particular $\PP(\mathcal{E}^\mathcal{L})=\sum_sF_n^\mathcal{L}(s)$. 
\end{corollary}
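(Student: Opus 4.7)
The plan is to derive all three formulas directly from Proposition~\ref{prop:fund_results}, using only the definition of conditional probability and the law of total probability; no new recursion or independence argument is required beyond what that proposition already packaged.

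First I would obtain the two posterior formulas. By the definition of conditional probability, for any event $A$ compatible with $\mathcal{E}^\mathcal{L}$ we have $\PP(A\mid\mathcal{E}^\mathcal{L})=\PP(A,\mathcal{E}^\mathcal{L})/\PP(\mathcal{E}^\mathcal{L})$. Applying this with $A=\{S_i=s\}$ and invoking Equation~(\ref{eq:separator}) for the numerator immediately yields the single-state posterior $F_i^\mathcal{L}(s)B_i^\mathcal{L}(s)/\PP(\mathcal{E}^\mathcal{L})$. Applying it instead with $A=\{S_{i-1}=r,\,S_i=s\}$ and invoking Equation~(\ref{eq:clique}) with $x=x_i$ gives the pairwise posterior $F_{i-1}^\mathcal{L}(r)\alpha(r,s)\beta_s(x_i)B_i^\mathcal{L}(s)/\PP(\mathcal{E}^\mathcal{L})$. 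So the first two claims are essentially one-line consequences of the proposition.

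For the normalizing constant $\PP(\mathcal{E}^\mathcal{L})$, I would fix any $i\in\{1,\ldots,n\}$ and use the law of total probability with respect to $S_i$: since the events $\{S_i=s\}$ partition the sample space, $\PP(\mathcal{E}^\mathcal{L})=\sum_s\PP(S_i=s,\mathcal{E}^\mathcal{L})$. Substituting Equation~(\ref{eq:separator}) on the right-hand side gives $\PP(\mathcal{E}^\mathcal{L})=\sum_s F_i^\mathcal{L}(s)B_i^\mathcal{L}(s)$ for every such $i$. Specializing to $i=n$ and using the convention $B_n^\mathcal{L}\equiv 1$ collapses this to the final identity $\PP(\mathcal{E}^\mathcal{L})=\sum_s F_n^\mathcal{L}(s)$.

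There is no real obstacle here: the entire content of the corollary is already encoded in Proposition~\ref{prop:fund_results}, and the proof reduces to dividing by $\PP(\mathcal{E}^\mathcal{L})$ and marginalizing over $S_i$. The only point worth flagging in writing, to avoid an implicit assumption, is that the choice of $i$ in the expression $\sum_s F_i^\mathcal{L}(s)B_i^\mathcal{L}(s)$ is genuinely arbitrary, which is an immediate consequence of Equation~(\ref{eq:separator}) holding for every $i$ rather than a separate fact to be proven.
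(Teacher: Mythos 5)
Your proof is correct and matches the paper's intent exactly: the paper states this corollary without proof as a ``straightforward consequence'' of Proposition~\ref{prop:fund_results}, and your argument (divide Equations~(\ref{eq:separator}) and~(\ref{eq:clique}) by $\PP(\mathcal{E}^\mathcal{L})$, then marginalize over $S_i$ and use $B_n^\mathcal{L}\equiv 1$) is precisely the intended one-line derivation. Nothing is missing.
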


The following corollary is easily proven and makes it possible to sample the joint distribution of the hidden variables recursively:

\begin{corollary}[Forward and backward sampling]
The joint distribution of $S_{1:n}$ conditionally to $\mathcal{E}^\mathcal{L}$ is a Markov chain whose transitions are given by 
$$
\PP(S_i=s|S_{i-1}=r,\mathcal{E}^\mathcal{L})=\frac{\alpha(r,s)\beta_s(x_i)B_{i}^\mathcal{L}(s)}{B_{i-1}^\mathcal{L}(r)}\label{transition}
$$
in the forward direction, and by 
$$
\PP(S_{i-1}=r|S_i=s,\mathcal{E}^\mathcal{L})=\frac{F_{i-1}^\mathcal{L}(r)\alpha(r,s)\beta_s(x_i)}{F_i^\mathcal{L}(s)}
$$
in the backward direction.
\end{corollary}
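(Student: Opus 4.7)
The plan is to reduce both one-step transition formulas to ratios of quantities that are directly given by Proposition~\ref{prop:fund_results}, and then to argue separately that the conditional law of $S_{1:n}$ given $\mathcal{E}^\mathcal{L}$ really is Markov so that the claimed one-step transitions actually characterize the whole chain.

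For the forward transition I would start from the elementary identity
$$
\PP(S_i = s \mid S_{i-1} = r, \mathcal{E}^\mathcal{L}) = \frac{\PP(S_{i-1} = r, S_i = s, \mathcal{E}^\mathcal{L})}{\PP(S_{i-1} = r, \mathcal{E}^\mathcal{L})},
$$
use Eq.~(\ref{eq:clique}) to rewrite the numerator as $F_{i-1}^\mathcal{L}(r)\alpha(r,s)\beta_s(x_i)B_i^\mathcal{L}(s)$ and Eq.~(\ref{eq:separator}) to rewrite the denominator as $F_{i-1}^\mathcal{L}(r)B_{i-1}^\mathcal{L}(r)$; cancelling the common factor $F_{i-1}^\mathcal{L}(r)$ gives the claimed expression. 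The backward transition follows symmetrically: divide Eq.~(\ref{eq:clique}) at indices $(i-1,i)$ by Eq.~(\ref{eq:separator}) at index $i$ and cancel the common factor $B_i^\mathcal{L}(s)$.

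The subtler step, and the one I expect to be the main obstacle, is showing that conditionally on $\mathcal{E}^\mathcal{L}$ the sequence $S_{1:n}$ is genuinely Markovian, i.e.\ that $\PP(S_i = s \mid S_{1:i-1}=s_{1:i-1},\mathcal{E}^\mathcal{L})$ depends on $s_{1:i-1}$ only through $s_{i-1}$. For this I would go back to the HMM factorization~(\ref{hmm}), sum over $s_{i+1:n}$, and use the same conditional-independence argument as in the proof of Proposition~\ref{prop:fund_results} to identify that residual sum with $B_i^\mathcal{L}(s)$, yielding
$$
\PP(S_{1:i}=s_{1:i},\mathcal{E}^\mathcal{L})=\left[\mu(s_1)\beta_{s_1}(x_1)\prod_{j=2}^{i-1}\alpha(s_{j-1},s_j)\beta_{s_j}(x_j)\right]\alpha(s_{i-1},s)\beta_s(x_i)B_i^\mathcal{L}(s).
$$
The bracketed factor depends on $s_{1:i-1}$ but not on $s$; dividing by $\PP(S_{1:i-1}=s_{1:i-1},\mathcal{E}^\mathcal{L})$, which is obtained by summing the above over $s$ and collapsing the resulting sum into $B_{i-1}^\mathcal{L}(s_{i-1})$ via the backward recursion~(\ref{eq:back}), makes the bracketed factor cancel. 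One is left with a ratio depending only on $s_{i-1}$ and $s$, which coincides with the forward transition already derived. The backward Markov property, and the backward transition formula, then follow either by the general fact that any Markov chain is Markov under time reversal or by a symmetric computation using Eq.~(\ref{eq:forw}) to collapse the sum over the future states.
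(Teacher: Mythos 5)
Your argument is correct and is exactly the computation the paper has in mind when it calls this corollary ``easily proven'': both transition formulae follow by dividing the clique identity~(\ref{eq:clique}) by the separator identity~(\ref{eq:separator}) at the appropriate index and cancelling the common factor ($F_{i-1}^\mathcal{L}(r)$ in the forward case, $B_i^\mathcal{L}(s)$ in the backward case). Your additional verification that $S_{1:n}$ given $\mathcal{E}^\mathcal{L}$ is genuinely Markov --- by summing the joint factorization~(\ref{hmm}) over the future states to produce $B_i^\mathcal{L}(s)$ and observing that the prefix factor cancels --- is a point the paper silently assumes, and you are right to supply it.
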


\subsubsection{Forward-backward algorithm for segment-based model}
\label{sec:seg}

\vitto{For the sake of completeness}, we present the results of the inference problem for the HMMs constrained by the evidence~(\ref{eq:ev_s}). The formulae presented here are easily obtained by modifying the standard formulae proved in section~\ref{subsec:F/B} and are particular cases of the forward-backward algorithm for HMMs conditioned on the general evidence~(\ref{eq:ev}). In particular, the formulae from the previous section still hold for $i\notin\{1,n\}$. In these two special cases consider the constraints $S_1=1$ and $S_n=K$, which are imposed by adding the multiplicative constants $\mathbf{1}_{\{S_1=1\}}$ or $\mathbf{1}_{\{S_n=K\}}$ to the formulae\footnote{Given an event $\mathcal{A}$, $\mathbf{1}_{\mathcal{A}}=1$ iff $\mathcal{A}$ is true.} .  

Define the forward quantities as
$$
F_i^\mathcal{S}(s):=\PP(S_1=1,S_i=s,X_{1:i}=x_{1:i})
$$
for $i\leq n-1$ and 
$$
F_n^\mathcal{S}(s):=\PP(S_1=1,S_n=s,S_n=K,X_{1:n}=x_{1:n})=\mathbf{1}_{\{s=K\}}\PP(S_1=1,S_n=s,X_{1:n}=x_{1:n}).
$$
Define the backward quantities as
$$
B_i^\mathcal{S}(s):=\PP(X_{i+1:n}=x_{i+1:n},S_n=K|S_i=s),
$$ 
for all $i \in \{1,\ldots,n\}$, with the convention that $B_n^\mathcal{S} \equiv 1$. The corresponding equations for ~(\ref{eq:separator}) and~(\ref{eq:clique}) become $\PP(S_i=s,\mathcal{E}^\mathcal{S})=F_i^\mathcal{S}(s)B_i^\mathcal{S}(s)$ and 
$$
\PP(S_{i-1}=r,S_{i}=s,\mathcal{E}^\mathcal{S})=F_{i-1}^\mathcal{S}(r)\alpha(r,s)\beta_{s}(x)B_i^\mathcal{S}(s)
$$
for all $i=1,\ldots,n-1$ and 
$$
\PP(S_{n-1}=r,S_{n}=s,\mathcal{E}^\mathcal{S})=\mathbf{1}_{\{s=K\}}F_{n-1}^\mathcal{S}(r)\alpha(r,s)\beta_{s}(x)B_n^\mathcal{S}(s).
$$

The forward quantities can be computed iteratively starting from $F_1^\mathcal{S}(s)=\mathbf{1}_{\{s=1\}}\mu(s)\beta_s(x_1)$ for all $i=2,\ldots,n-1$ with 
\begin{equation*}
\label{eq:forw_seg}
F_{i}^\mathcal{S}(s) =\sum_{r}F_{i-1}^\mathcal{S}(r)\alpha(r,s)\beta_s(x_i),
\end{equation*}
and $F_n^\mathcal{S}(s)=\mathbf{1}_{\{s=K\}}\sum_{r}F_{n-1}^\mathcal{S}(r)\alpha(r,s)\beta_s(x_i)$. The recursions for computing the backward quantities are exactly the same as in the level-based setting:
\begin{equation*}
\label{eq:back_seg}
B_{i-1}^\mathcal{S}(r)=\sum_{s}\alpha(r,s)\beta_s(x_i)B_i^\mathcal{S}(s).
\end{equation*}

The preceding constraints result in less computations due to the sparse transition matrix between hidden states, as $\alpha(r,s)=0$ if $s-r \notin (0,1)$ \greg{leading to a $O(Kn)$ complexity}. In the constrained model we note that the probability of the evidence can be computed, for instance, with $\PP(\mathcal{E}^\mathcal{S})=F_1^\mathcal{S}(1)B_1^\mathcal{S}(1)$.

\subsection{Extensions}

Extending the previously described recursive formulae and results to more general evidence forms is a straightforward process. For instance, suppose that we observe $X_i=x_i$ for all $i$ except for $i=5$, which we will consider as missing. In this situation, the evidence is $\mathcal{E}=\{X_{1:4}=x_{1:4},X_{6:n}=x_{6:n}\}$. Define the forward and backward quantities exactly as in the level-based case with the only difference that for $i\geq 5$, $F_i(s)=\PP(S_i=s,X_{1:4}=x_{1:4},X_{6:i}=x_{6:i})$ and for $i\leq4$ $B_i(s)=\PP(X_{i+1:4}=x_{i+1:4},X_{6:n}=x_{6:n}|S_i=6)$. In practice, all the results from the level-based case still hold if we substitute $\beta_s(x_5)$ by $1$ in the formulae expressing $\PP(S_{4}=r,S_5=s,\mathcal{E})$, $F_5(s)$, $B_{4}(r)$, $\PP(S_5=s|S_{4}=r,\mathcal{E})$ and $\PP(S_4=r|S_5=s,\mathcal{E})$.

Another situation is all variables $(X_1,\ldots,X_n)$ being observed with additional partial prior knowledge about the hidden variables. Suppose for instance $X_{1:n}=x_{1:n}$ and, say, $S_7\neq2$. The evidence is $\mathcal{E}=\{X_{1:n}=x_{1:n},S_7\neq2\}$ and we define the forward and backward quantities as before with the following exceptions: $F_i(s)=\PP(X_{1:i}=x_{1:i},S_7\neq2,S_i=s)$ for $i\geq 7$ and $B_i(s)=\PP(X_{i+1}=x_{i+1},S_7\neq 2|S_i=s)$ for $i<7$. All the results continue to hold by substituting $\alpha(r,s)$ by $\mathbf{1}_{s\neq2}\cdot\alpha(r,s)$ in the formulae expressing $\PP(S_{6}=r,S_7=s,\mathcal{E})$, $F_7(s)$, $B_{6}(r)$, $\PP(S_7=s|S_{6}=r,\mathcal{E})$ and $\PP(S_6=r|S_7=s,\mathcal{E})$.


The forward-backward algorithm is also known as the \emph{product-sum algorithm} and is a particular case of the \emph{message propagation} algorithm for Bayesian networks \greg{\citep{koller2009probabilistic}}. As previously seen, its basic mechanism is the simple distributive property of multiplication over addition. 

Another common inferential problem in HMMs is in finding a set of variables with the largest joint state probability and their corresponding marginal probabilities. In general, the set which maximizes the joint state probabilities may not exactly match the set that maximizes each marginal probability separately. The \emph{max-sum} algorithm (also known as the \emph{Viterbi} algorithm) addresses this issue by replacing summations by maximizations in all the above formulae \citep{viterbi67, rabiner89}. The previous results continue to hold because the multiplication is distributive over the max operator: $\max(ab,ac)=a\max(b,c)$.

\subsection{Floating-point issues}


Underflow is a common issue when using forward and/or backward recursions in floating-point arithmetic. Indeed, the magnitude of \greg{the forward quantity} $F_i$ decreases geometrically with $i$ and can be smaller than the smallest machine float (ex: $2.23\times 10^{-308}$ for C++ double-precision on i686 architecture) which is an architecture-dependent threshold. As suggested in \citep[pages 272-273]{rabiner89}, one solution consists in keeping track of a rescaling parameter (typically stored in log-scale) for each $i$. To improve this inefficient approach, in terms of both time and memory, we suggest to use log-scale computation through $\log F_i$ and $\log B_i$ \greg{both for level- and segment-based models}.

For both expressions of $F_i$ and $B_i$ we recommend the initial precomputation of $\log \alpha(r,s)$ and $\log \beta_s(x_i)$ for all $r,s,i$. The forward and backward recursions become:
$$
\log F_i(s)= \mbox{logsum} [ \log F_{i-1}(\cdot)+\log\alpha(\cdot,s)+\log \beta_s(x_i) ]
$$
$$
\log B_{i-1}(r)= \mbox{logsum} [ \log\alpha(r,\cdot)+\log \beta_\cdot(x_i)+\log B_{i}(\cdot) ]
$$
where $\mbox{logsum}$ is a function of any real vector $z=(z_1,z_2,\ldots z_k)$. For example if $z_1\leqslant z_2 \leqslant \ldots z_k$ and $\mbox{logsum}(z_1,z_2,\ldots,z_k)=\log \sum_j \exp(z_i)$:
$$
\mbox{logsum}(z_1,z_2,\ldots,z_k)=z_1+\mbox{log1p}[\exp(z_2-z_1)+\ldots+\exp(z_k-z_1)]
$$
where $\mbox{log1p}(u)=\log(1+u)$, which is useful in floating-point arithmetic when $u$ is small.

\subsection{E-M algorithm}

Let $\mathcal{E}$ be either $\mathcal{E}^\mathcal{L}$ or $\mathcal{E}^\mathcal{S}$ or a more general evidence. The Expectation-Maximization (EM) algorithm consists of repeating iteratively the two following steps:
\begin{itemize}
\item {\emph Expectation (E)}: compute $Q(\theta'|\theta)=\sum_{S} \mathbb{P}_\theta(S|\mathcal{E}) \log \mathbb{P}_{\theta'}(S,\mathcal{E})$;
\item {\emph Maximization (M)}: update $\theta$ with $\tilde\theta=\arg \max_{\theta'} Q(\theta'|\theta)$.
\end{itemize}

The resulting update formulae depend on the model considered (level- or segment-based) and on the nature of the emission distribution. In this section, we consider only the two following emission models:
\begin{itemize}
\item {\emph {normal homoscedastic}}: for all $x \in \mathbb{R}$ and hidden state $s$,  $\beta_s(x)=\varphi((x-\mu_s)/\sigma)$ where $\mu_s \in \mathbb{R}$, $\sigma>0$, with $\varphi(z)=\exp(z^2/2)/\sqrt{2\pi}$ for all $z \in \mathbb{R}$;
\item {\emph {Poisson}}: for all $k \in \mathbb{N}$ and hidden state $s$, $\beta_s(k)=\exp(-\lambda_s)\lambda_s^ k/k!$ where $\lambda_s>0$. 
\end{itemize}

During the E-step, forward-backward recursions obtain $\mathbb{P}_\theta(S| \mathcal{E})$ as a heterogeneous Markov chain. In practice, it is only necessary to compute $\mathbb{P}_\theta(S_i|\mathcal{E})=F_i(S_i)B_i(S_i)/\mathbb{P}_\theta(\mathcal{E})$ and $\mathbb{P}_\theta(S_{i-1},S_{i}|\mathcal{E})=F_{i-1}(S_{i-1})\alpha(S_{i-1},S_i)\beta_{S_i}(x_i)B_i(X_i)/\mathbb{P}_\theta(\mathcal{E})$.

For both level- and segment- based model, the M-step is the same for the emission part of $Q(\theta'|\theta)$:
$$
\tilde{\mu}_s=\tilde{\lambda}_s=\frac{\sum_{i=1}^n X_i \mathbb{P}_\theta(S_i=s|\mathcal{E})}{\sum_{i=1}^n \mathbb{P}_\theta(S_i=s|\mathcal{E})}
\quad\text{and}\quad
\tilde{\sigma}^2=\frac{\sum_{i=1}^n \sum_s (X_i-\tilde{\mu}_s)^2 \mathbb{P}_\theta(S_i=s|\mathcal{E})}{n}.
$$

For the transition part, there is no parameter to update in the segment-based case, and for the level-based case (with the transition matrix defined \greg{in Eq.\ref{eq:trans}}):
$$
\tilde{\eta}_r=\frac{
\sum_{r\neq s} \sum_{i=2}^ n \mathbb{P}_\theta(S_{i-1}=r,S_{i}=s|\mathcal{E})
}
{
\sum_{i=2}^ n \mathbb{P}_\theta(S_{i-1}=r|\mathcal{E})
}.
$$


\subsection{Change-point estimation}

\subsubsection{Level-based model}

Change-points are identified based on observations where transitions are most likely to occur. In a HMM framework, we assess the probability that $i$ is any change-point between two different hidden states, regardless of the ordering within the set of change-points. 

The posterior probability of any change-point occurring at observation $i$, or such that $S_i \neq S_{i+1}$, is:
\begin{align*}
\nonumber \PP(CP=i|\mathcal{E}^\mathcal{L})&=\underset{r\neq s}{\sum}\PP(S_i=r,S_{i+1}=s|\mathcal{E}^\mathcal{L})\\
\nonumber&=\underset{r\neq s}{\sum}\PP(S_{i+1}=s|S_i=r,\mathcal{E}^\mathcal{L})\PP(S_i=r|\mathcal{E}^\mathcal{L})\\
&=\underset{r\neq s}{\sum}\frac{F_{i}^\mathcal{L}(r) \alpha(r,s) \beta_s(x_{i+1})B_{i+1}^\mathcal{L}(s)}{F_1^\mathcal{L}(1)B_1^\mathcal{L}(1)}
\end{align*}
for $i=1,\ldots,n-1$ and $r,s=1,\ldots,L$\vitto{, where the last equality follows from Section~\ref{subsec:F/B}}.


Unlike the segment-based approach, the level-based approach does not allow to obtain the the distribution of the specific $r^{th}$ change-point, but only the marginal probability of a change-point at a given position. It is however possible to derive the distribution of the  $r^{th}$ change-point in the level-based approach as the waiting time of a regular expression in a heterogeneous Markov chain \citep{aston07}.

\subsubsection{Segment-based model}
\label{sec:segment_based}

Define the location of the $r^{th}$ change-point as $CP_r$. Under this definition, the posterior probability of the $r^{th}$ change at observation $i$ is $\PP(CP_r=i)=\PP(S_i=r,S_{i+1}=s|\mathcal{E}^\mathcal{S})$, where $s=r+1$. In other words the $r^{th}$ change-point is the last observation of segment $r$ before segment $r+1$, each of whom consists of contiguous, homogeneous observations.

Using the formulae provided in Section~\ref{sec:seg}, the posterior probability of the $r^{th}$ change-point occurring after observation $i$, where $s=r+1$ is:
\begin{align*}
\nonumber \PP(CP_r=i|\mathcal{E}^\mathcal{S})&=\PP(S_i=r,S_{i+1}=s|\mathcal{E}^\mathcal{S})\\
\nonumber&=\PP(S_{i+1}=s|S_i=r,\mathcal{E}^\mathcal{S})\PP(S_i=r|\mathcal{E}^\mathcal{S})\\
&=\frac{F_{i}^\mathcal{S}(r) \alpha(r,r+1) \beta_{r+1}(x_{i+1})B_{i+1}^\mathcal{S}(r+1)}{F_1^\mathcal{S}(1)B_1^\mathcal{S}(1)}. 
\end{align*}
for $i=1,\ldots,n-1$ and $r=1,\ldots,K-1$.

\section{Examples}\label{examples}

\subsection{British coal mining disaster data set}

\begin{figure*}
\subfigure[Posterior state probability]{
\includegraphics[scale=0.35]{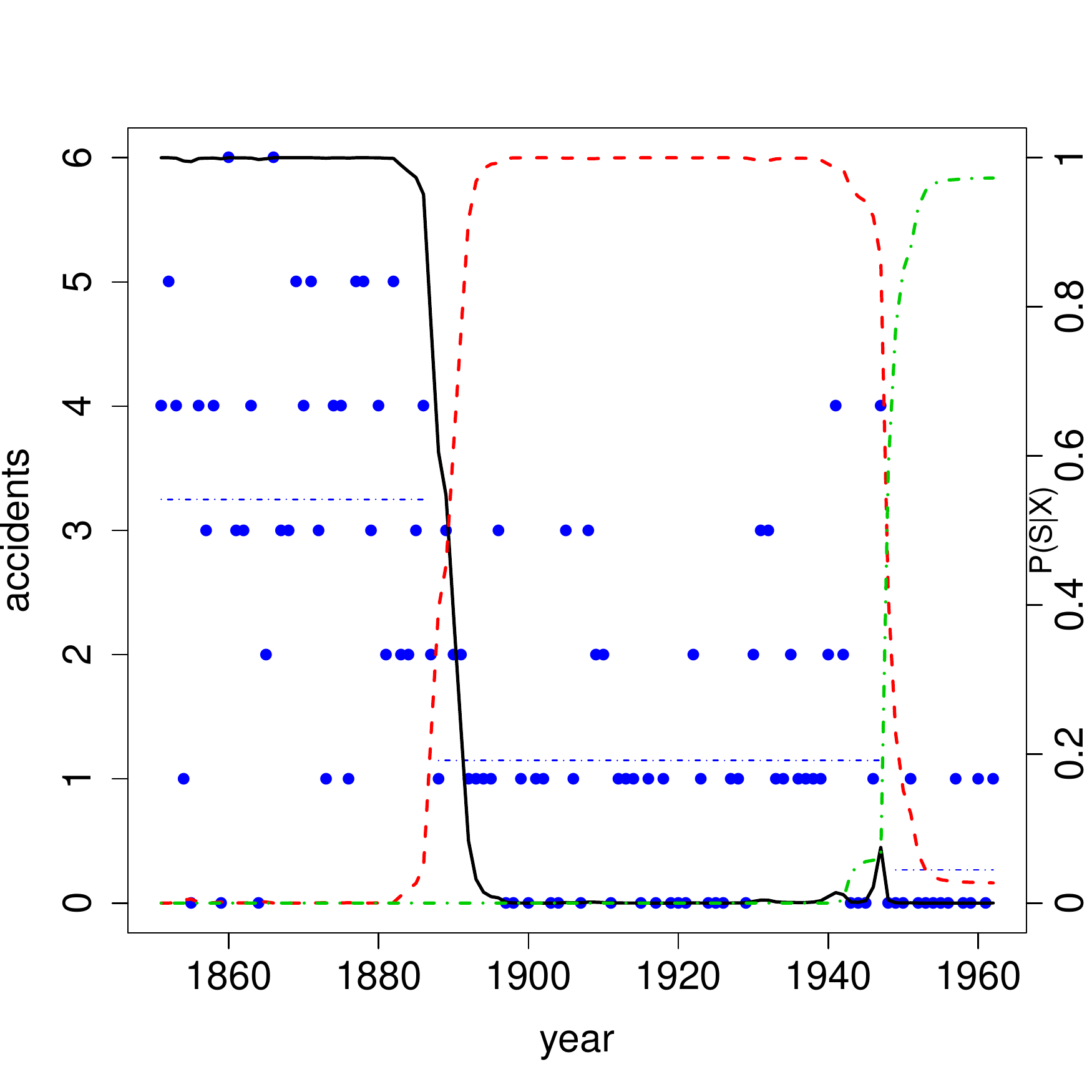}
\label{cls}}       
\subfigure[Posterior change-point probability]{
\includegraphics[scale=0.35]{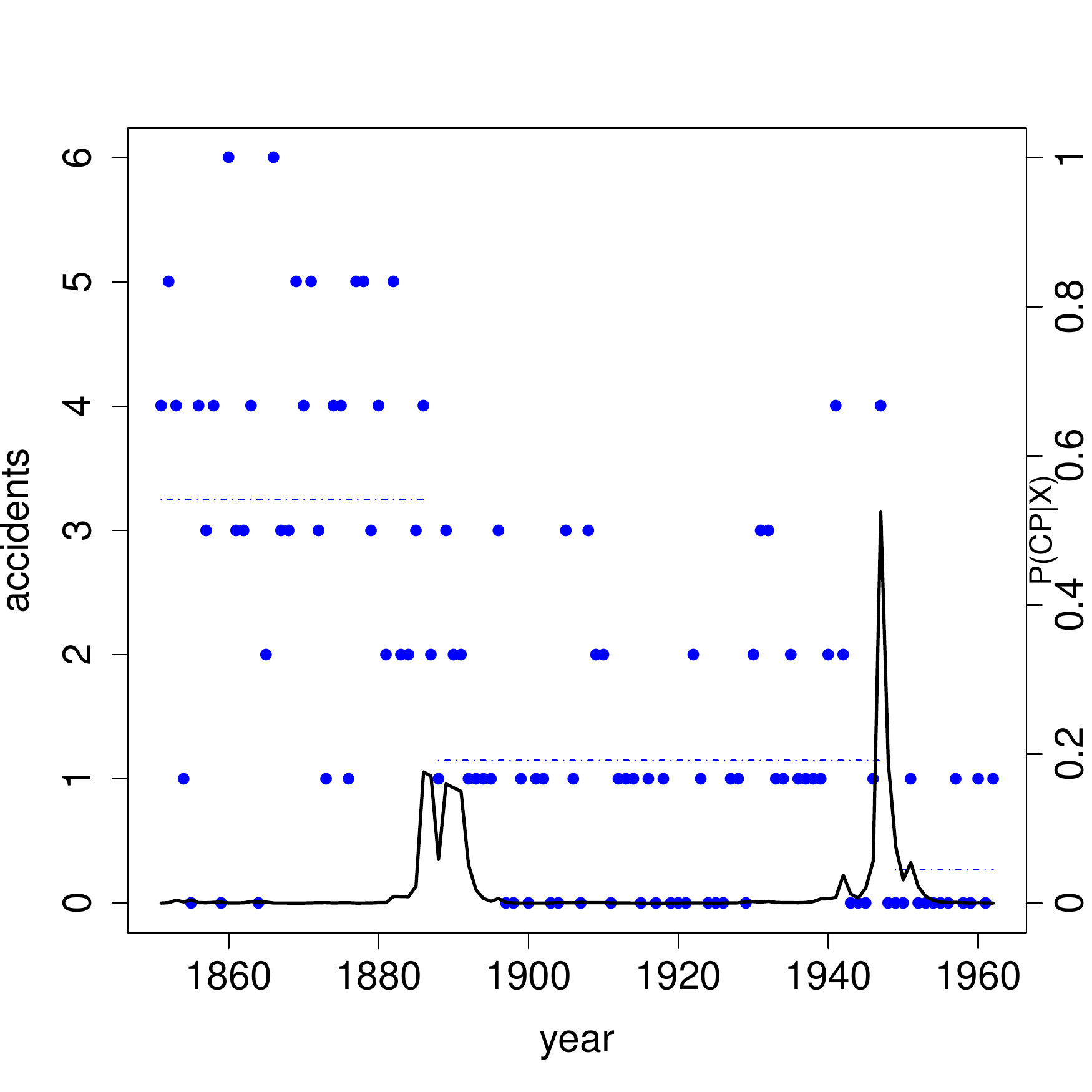}\\
\label{clcp}}    
\caption{\emph{Level-based model.} Plots of estimated posterior probabilities in British coal mining disaster data set \citep{carlin92}. Dots are annual accidents with horizontal lines being the maximum likelihood estimates of the $L=3$ level means. (a) Posterior marginal probability of $i$ being in state $r$, $\mathbb{P}(S_i=r|\mathcal{E}^\mathcal{L})$, with solid line being state $1$, dashed line state $2$ and dashed-dotted line state $3$. (b) Posterior probability of $i$ being any change-point $\mathbb{P}(CP=i|\mathcal{E}^\mathcal{L})=\mathbb{P}(S_i=r,S_{i+1}=s|\mathcal{E}^\mathcal{L})$, where $r\neq s$.}
\subfigure[Posterior state probability]{
\includegraphics[scale=0.35]{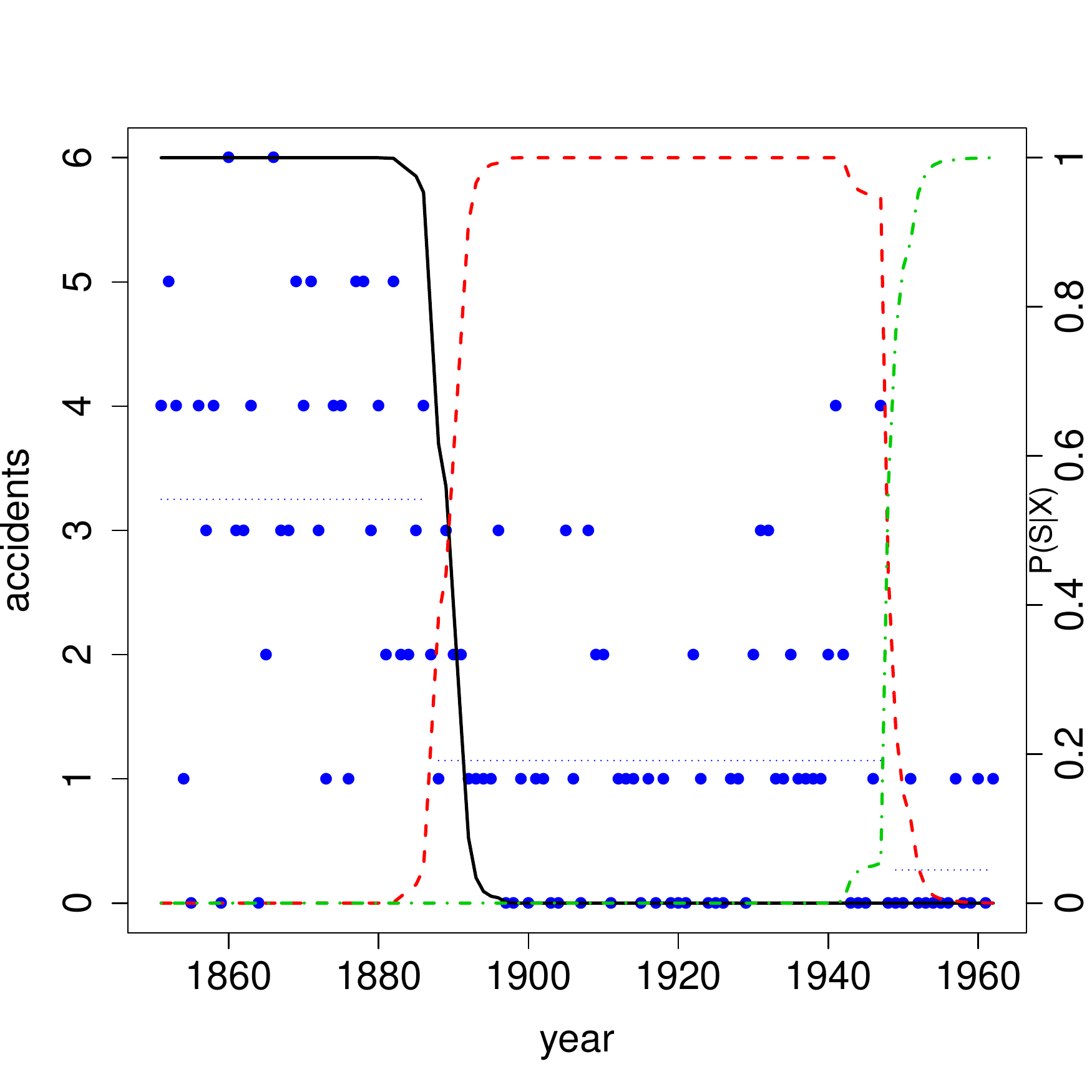}
\label{css}}       
\subfigure[Posterior change-point probability]{
\includegraphics[scale=0.35]{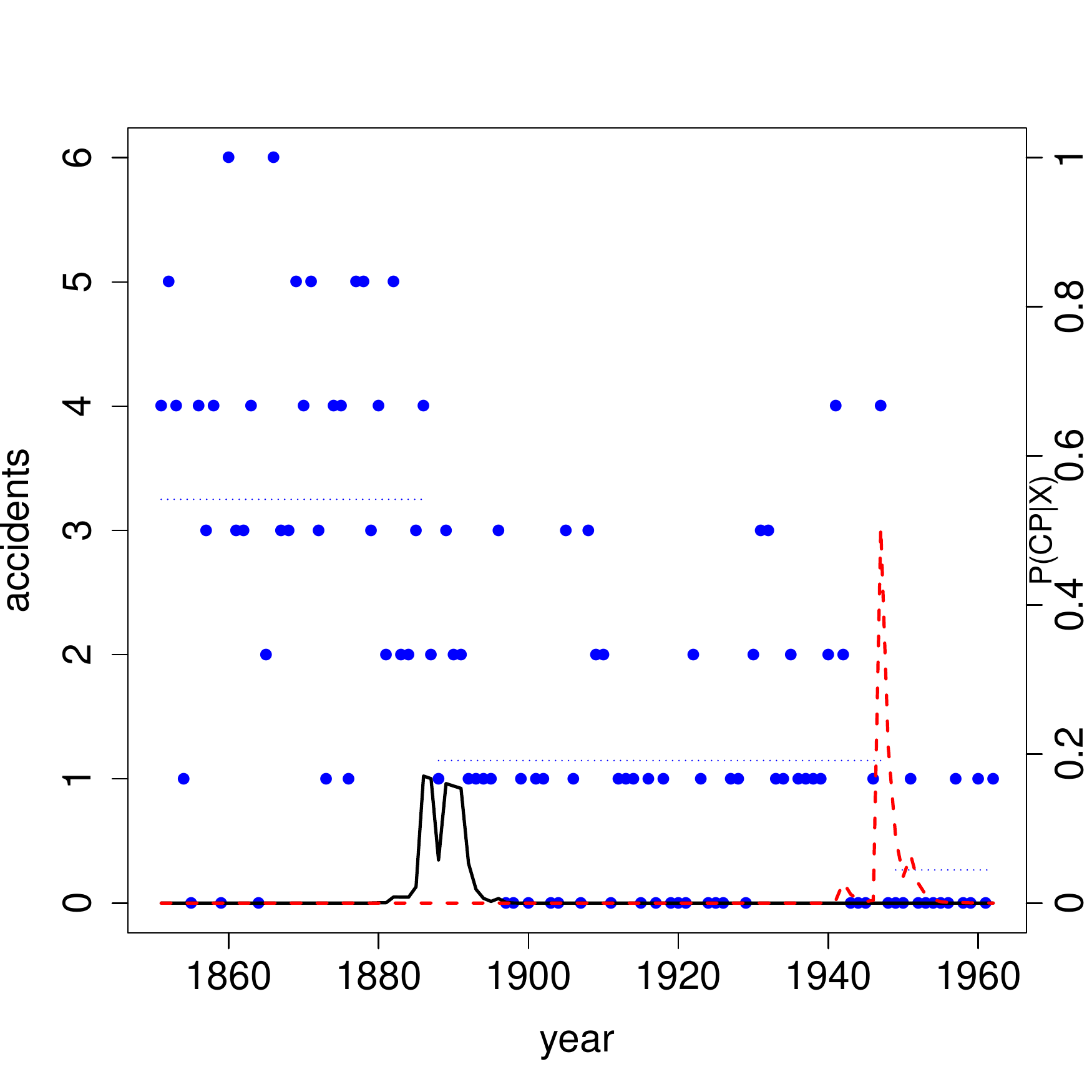}
\label{cscp}}    
\caption{\emph{Segment-based model.} Plots of estimated posterior probabilities in British coal mining disaster data set \citep{carlin92}. Dots are annual accidents with horizontal lines being the maximum likelihood estimates of the $K=3$ segment means. (a) Posterior marginal probability of $i$ being in state $r$, $\mathbb{P}(S_i=r|\mathcal{E}^\mathcal{S})$, with solid line being segment $1$, dashed line segment $2$ and dashed-dotted line segment $3$. (b) Posterior probability of $i$ being $r^{th}$ change-point $\mathbb{P}(CP_r=i|\mathcal{E}^\mathcal{S})=\mathbb{P}(S_i=r,S_{i+1}=r+1|\mathcal{E}^\mathcal{S})$, with solid line being change-point $1$, dashed line change-point $2$.}
\end{figure*}

We illustrate the methods on the classical British coal mining disaster data set \citep{carlin92}, which displays the number of accidents per year in Great Britain between 1851 and 1962, $n=112$. The observed count data in the HMM use a Poisson emission distribution. For a change-point model with 3 segments, a greedy least squares minimization algorithm \citep{hartigan79} identified change-points at $i=36$ and $97$, corresponding to the years $1886$ and $1947$, respectively. 

The level-based approach models the change-points as transitions between states, which correspond to mean parameters of the Poisson distribution, or levels. Using these initial change-points, we obtain maximum likelihood estimates of means $\hat{\lambda}=(3.25,1.15,0.27)$ and the transition probabilities to be $\hat{\eta}=(1/36,1/61)$. Through the forward-backward algorithm described in section \ref{subsec:F/B} the posterior estimates of the state of observation $i$ given the data are $\mathbb{P}(S_i=r|\mathcal{E}^\mathcal{L})$ (Figure \ref{cls}) for each of $L=3$ levels, and for any change occurring at $i$ $\mathbb{P}(CP=i|\mathcal{E}^\mathcal{L})$ (Figure \ref{clcp}).

Figures \ref{css} and \ref{cscp} display the posterior probabilities of the marginal distribution $\mathbb{P}(S_i=r|\mathcal{E}^\mathcal{S})$ for each of $K=3$ segments and change-point probability $\mathbb{P}(CP_r=i|\mathcal{E}^\mathcal{S})=\mathbb{P}(S_i=r,S_{i+1}=r+1|\mathcal{E}^\mathcal{S})$ for $K-1=2$ change-points, respectively. The segment-based approach models the change-points as the beginning and end of $3$ intervals of contiguous observations with homogeneous distributions. Considering the hidden states of both level and segment -based HMMs are the same, the two different approaches produce almost identical respective marginal state and change-point curves.

From a historical perspective it is practical to look for events that may have triggered these detected changes in accident rates. In previous change-point literature, \citet{raftery86} noted that the first change at year $1886$ occurred during a decline in labor productivity at the end of the 1800s and the emergence of the Miners' Federation. Though, the uncertainty in the first change-point location suggests that the reduction in accidents was not due to a sudden event but to a continual decline over this time period. Meanwhile, the second change-point at $1947$ coincides with a more clearly-defined time point that is two years after the end of the second World War.

\subsection{Breast cancer data set}

\begin{figure*}
\subfigure[Posterior marginal state probability]{
\includegraphics[scale=0.35]{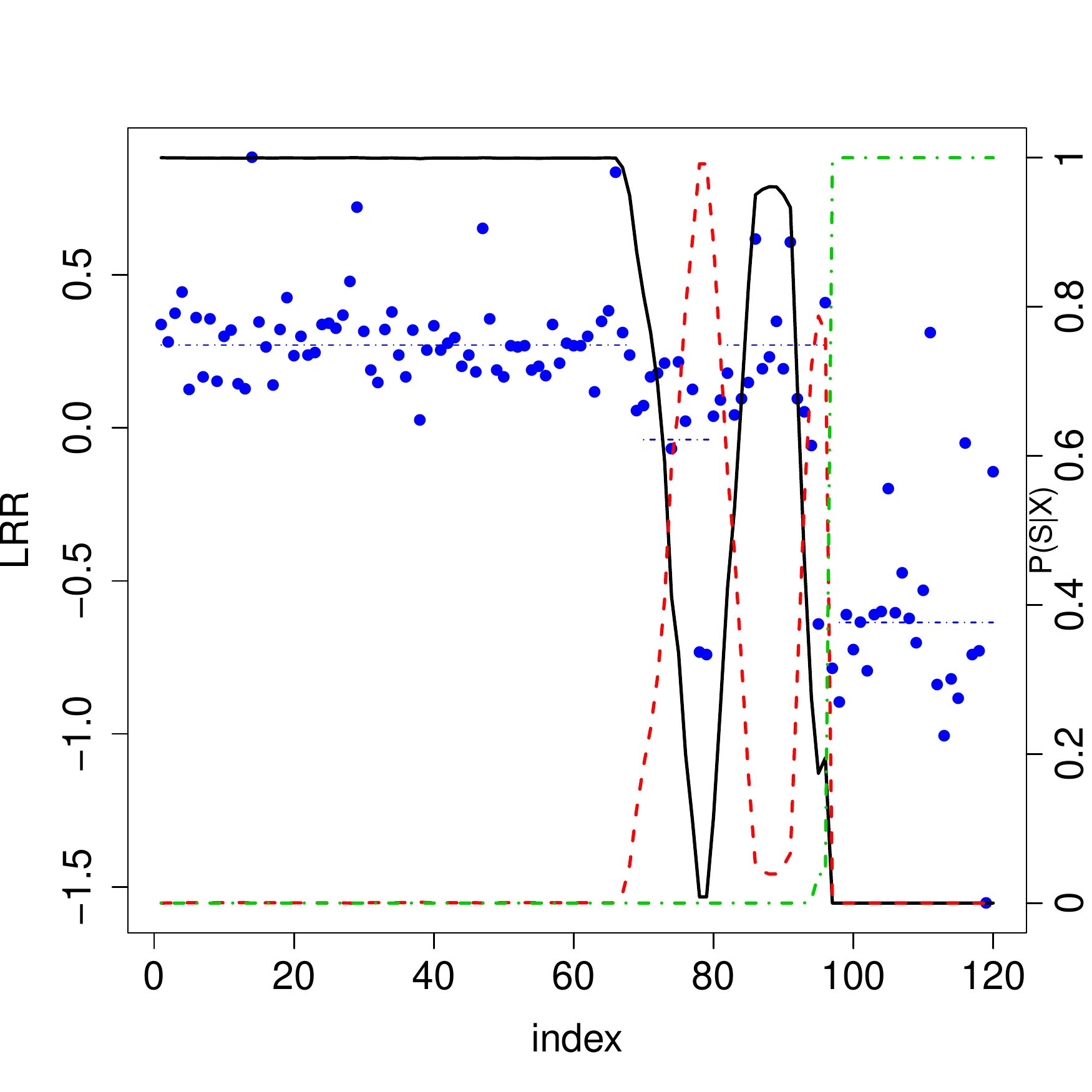}
\label{lls}}       
\subfigure[Posterior change-point probability]{
\includegraphics[scale=0.35]{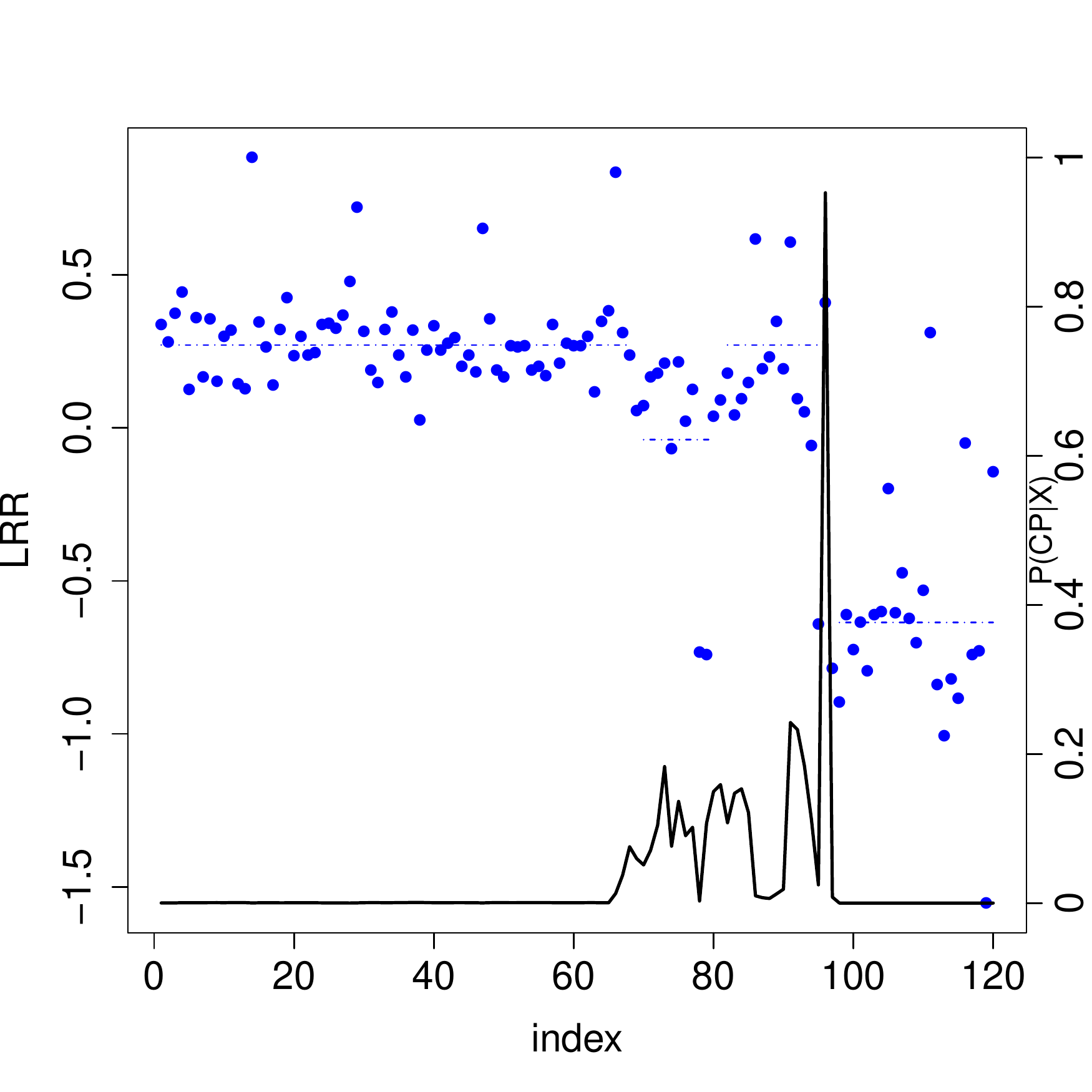}\\
\label{llcp}}    
\caption{\emph{Level-based model.} Plots of estimated posterior probabilities in breast-cancer data set \citep{snijders01}. Dots are LRR with horizontal lines being the maximum likelihood estimates of the $L=3$ level means. (a) Posterior marginal probability of $i$ being in state $r$, $\mathbb{P}(S_i=r|\mathcal{E}^\mathcal{L})$, with solid line being state $1$, dashed line state $2$ and dashed-dotted line state $3$. (b) Posterior probability of $i$ being any change-point $\mathbb{P}(CP=i|\mathcal{E}^\mathcal{L})=\mathbb{P}(S_i=r,S_{i+1}=s|\mathcal{E}^\mathcal{L})$, where $r\neq s$.}
\subfigure[Posterior marginal state probability]{
\includegraphics[scale=0.35]{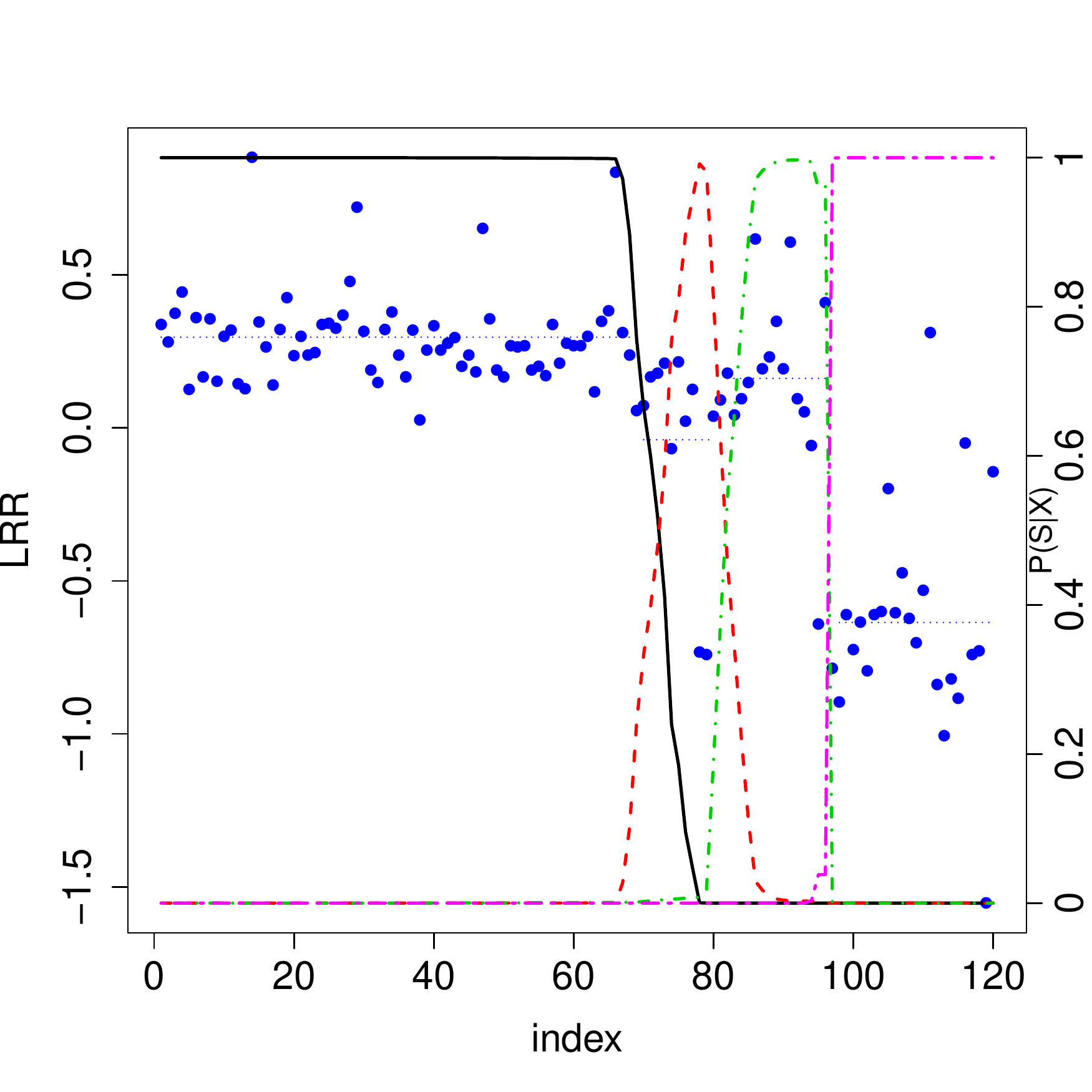}
\label{lss}}       
\subfigure[Posterior change-point probability]{
\includegraphics[scale=0.35]{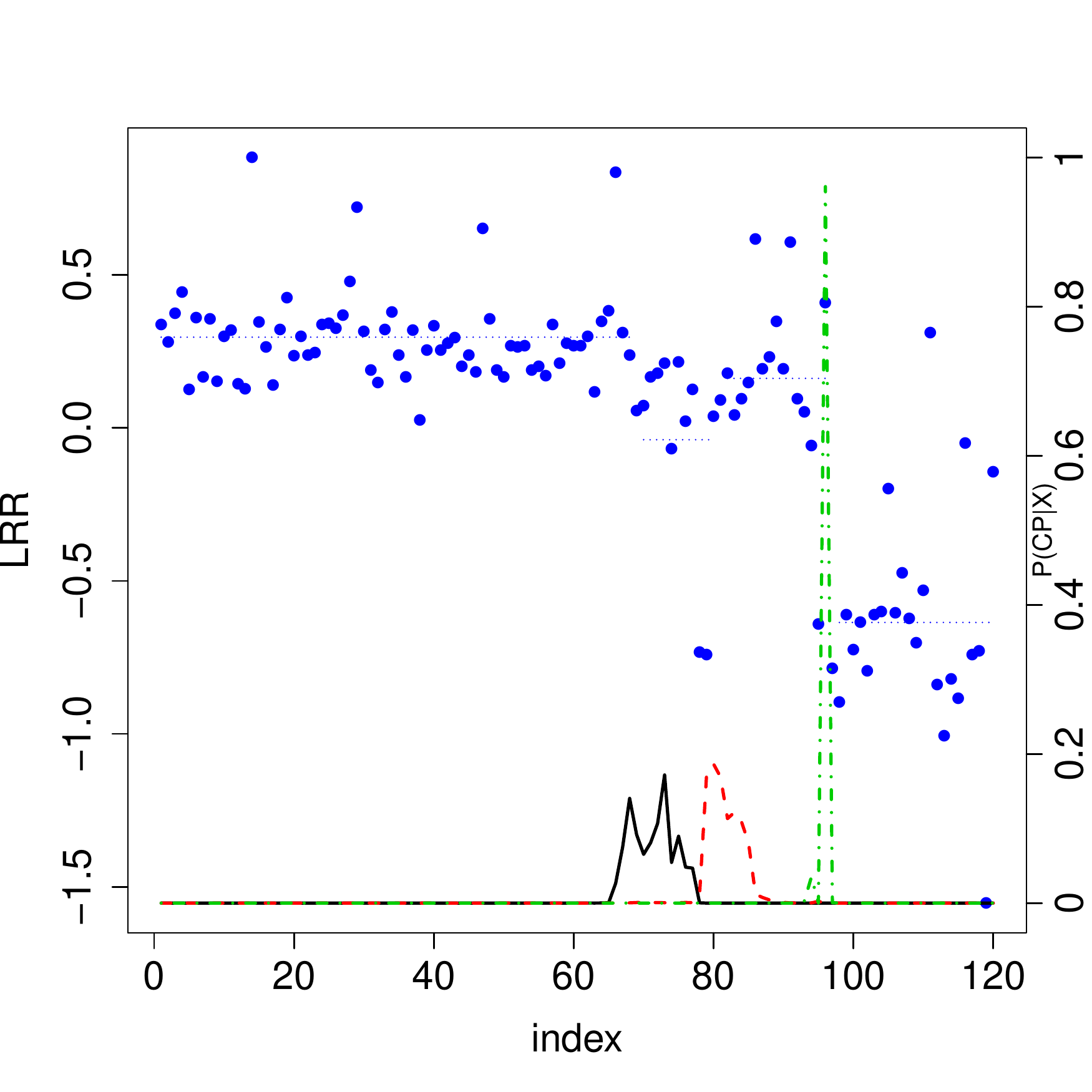}
\label{lscp}}    
\caption{\emph{Segment-based model.} Plots of estimated posterior probabilities in breast-cancer data set \citep{snijders01}. Dots are LRR with horizontal lines being the maximum likelihood estimates of the $K=4$ segment means. (a) Posterior marginal probability of $i$ being in state $r$ $\mathbb{P}(S_i=r|\mathcal{E}^\mathcal{S})$, with solid line being segment $1$, dashed line segment $2$, dashed-dotted line segment $3$ and long dashed line segment $4$. (b) Posterior probability of $i$ being $r^{th}$ change-point $\mathbb{P}(CP_r=i|\mathcal{E}^\mathcal{S})=\mathbb{P}(S_i=r,S_{i+1}=r+1|\mathcal{E}^\mathcal{S})$, with solid line being change-point $1$, dashed line change-point $2$, and dotted-dashed line change-point $3$.}
\end{figure*}

We apply the methods to a widely referenced data set from a breast cancer cell line BT474 \citep{snijders01}. The data consist of log-reference ratios (LRRs) signifying the ratio of genomic copies of test samples compared to normal. The goal is to segment the data into segments with similar copy numbers, with change-points pointing to a copy number aberration that may signify genetic mutations of interest \citep{pinkel98}. We use the same data previously analyzed \citep{rigaill11}, consisting of $n=120$ observations from chromosome~10. The observations are sorted according to their relative position along chromosome~10. The observed LRR data in the HMM uses a normal emission distribution. 

For a change-point model with 4 segments, the least squares algorithm identified the most likely change-points at $i=68,80$ and $96$. Under this specific level-based model, the observations in segments 1 and 3 may share the same distribution. Using these initial change-points, the maximum likelihood estimates are $\hat{\mu}=(0.271,-0.039,-0.636)$ for the $3$ levels and the transition probabilities to be $\hat{\eta}=(2/84,1/16)$. We obtain the posterior estimates of the state of observation $i$ given the data $\mathbb{P}(S_i=r|\mathcal{E}^\mathcal{L})$ (Figure \ref{lls}) for $L=3$ levels and any change occurring at $i$ $\mathbb{P}(CP=i|\mathcal{E}^\mathcal{L})$ (Figure \ref{llcp}).

Figures \ref{lss} and \ref{lscp} display the posterior probabilities of the marginal distribution $\mathbb{P}(S_i=r|\mathcal{E}^\mathcal{S})$ for $K=4$ segments and change-point probabilities $\mathbb{P}(CP_r=i|\mathcal{E}^\mathcal{S})=\mathbb{P}(S_i=r,S_{i+1}=r+1|\mathcal{E}^\mathcal{S})$ for $K-1=3$ change-points, respectively. The maximum likelihood estimates of means are $\hat{\mu}=(0.289,-0.039,0.224,-0.636)$ for the 4 segments. With different segments $1$ and $3$ defined as homogeneous, the two approaches produce slightly different probability distributions of change-point locations, though the locations of peaks remain similar.

\begin{figure*}
\centering{
\includegraphics[scale=0.5]{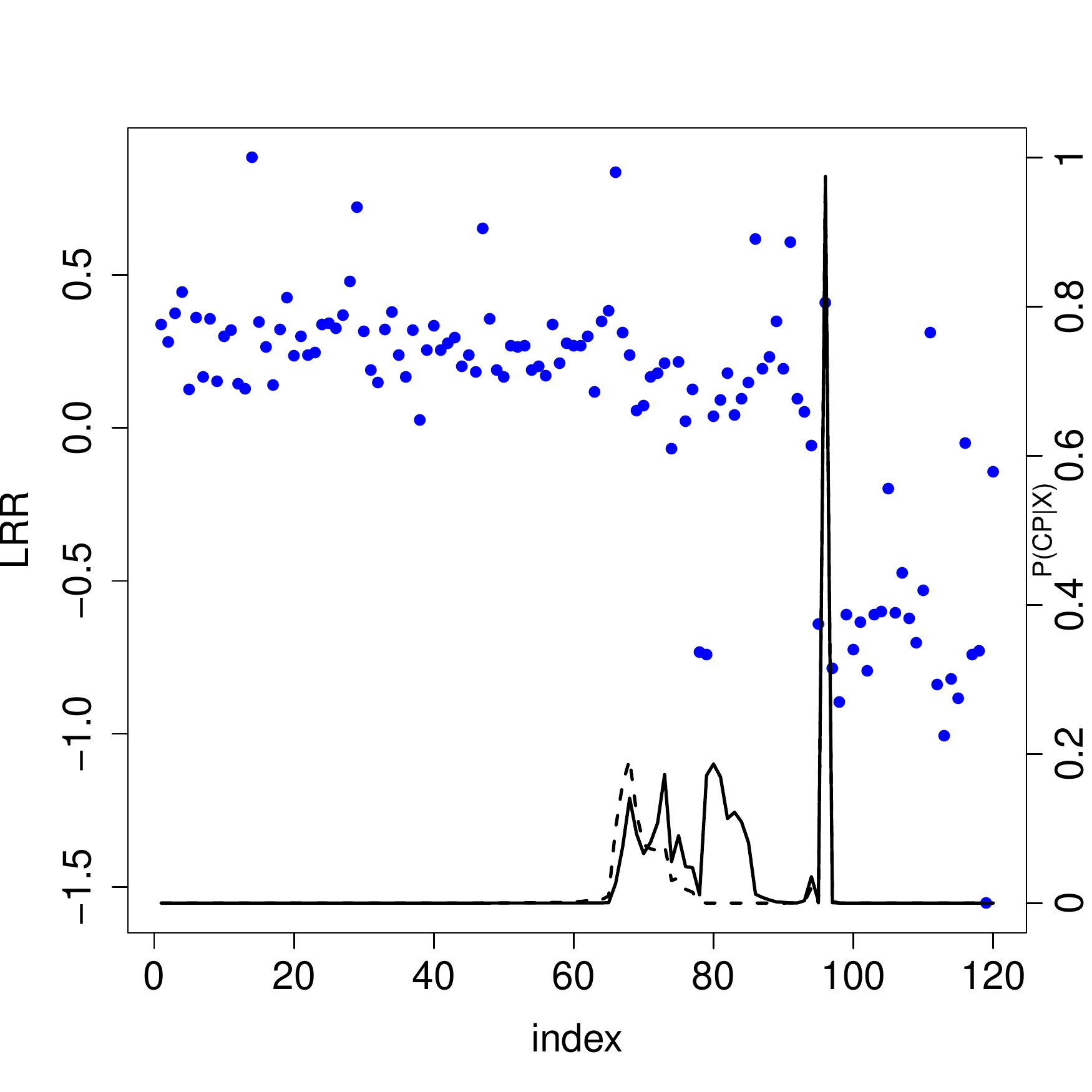}
\caption{Plots of estimated posterior change-point probabilities of $K=3$ and $K=4$ segment change-point models in breast-cancer data set \citep{snijders01}. Dots are LRR, lines are posterior probabilities of a change-point, where $\mathbb{P}(CP_r=i|\mathcal{E}^\mathcal{S})=\mathbb{P}(S_i=r,S_{i+1}=r+1|\mathcal{E}^\mathcal{S})$. Change-point probability curve is dotted for $K=3$ segment model and solid for $K=4$ segment model.}\label{l3v4}    }
\end{figure*}

Figure \ref{l3v4} compares the change-point probability curves for both $K=3$ and $K=4$ segment-based change-point models. The $K=3$ segment-based model does not include the second change-point at $i=80$. The shape of the posterior probability curve of the first change-point slightly changes between the two models, due to the uncertainty in the precise location of this point. On the other hand, the peak of the last change-point is close to one, and due to this high precision, the corresponding change-point curve from both the $K=3$ and $K=4$ change-point models virtually overlap.

\section{Current implementations of change-point models for genomics data}\label{genomics}

A common point of interest in bioinformatics is to find genetic mutations pointing to phenotypes susceptible in cancer and other diseases. The detection of change points in Copy Number Variation (CNV) is a critical step in the characterization of DNA, including tumoral DNA in cancer. A CNV may locate a genetic mutation such as a duplication or deletion in a cancerous cell that is a target for treatment. 

Many level-based approaches use the hidden state space in a classical discrete HMM to characterize mutations \citep{fridlyand04} to map the number of states and the most likely state at each position. Various extensions to this HMM approach include various procedures such as merging change-points and specifying prior transition matrices \citep{willenbrock05,marioni06} to improve the results. Other extensions include reversible-jump Markov chain Monte Carlo (MCMC) to fit the HMM \citep{rueda07}, a continuous-index HMM \citep{stjernqvist07} that takes into account the discrete nature of observed genomics data, and methods that take into account genomic distance and overlap between clones \citep{andersson08}. This HMM approach has also been extended for simultaneous change-points across multiple samples \citep{shah09} and for simultaneously identifying multiple outcomes \citep{liu10}. Other HMM-based implementations \citep{colella07,wang07} deal with higher-resolution data for current SNP array technologies. 

A wide amount of non-HMM approaches are also available for bioinformatics data. The aim of these approaches is typically in finding contiguous segments consisting of observations with the same distribution. One such implementation is a non-parametric extension of binary segmentation for multiple change-points through permutation \citep{olshen04} along with a faster extension \citep{venkatraman07} which uses stopping rules. Various smoothing techniques \citep{hupe04,eilers05,hsu05} have also been applied to change-point modelling. Summaries and comparisons of the various approaches for finding change-points in genomics data are available \citep{willenbrock05,lai05}.


\section{Conclusion}\label{conclusion}

This chapter describes a simple algorithm using hidden Markov models to estimate posterior distributions of interest in change-point analysis, using two different modelling approaches. It also addresses computational issues through several simple constraints on the HMM to allow for estimates in a feasible amount of time. 

HMMs are a special case of Bayesian Networks (BNs), which are probabilistic graphical models that represent random variables and their dependencies via a Directed Acyclic Graph (DAG), for example in Figure~\ref{fig:hmm}. The conditional dependencies coded by the edges of the DAG determine a factorization of the joint probability distribution. Given any type of evidence, summing variables out in the joint distribution obtains the conditional distributions of the variables. For BNs, this is done in a precise fashion by the exact Belief Propagation (BP) algorithm, which efficiently applies the distributive law to provide a recursive decomposition of the initial sums of products. The intermediate sums of products (the so-called messages or beliefs) propagate through a secondary tree-shaped structure called a \emph{junction tree} or \emph{graph tree}. This general framework allows for recursive algorithms to obtain various quantities of interest including posterior distribution, likelihood or entropy-related terms. Moreover, the recursive algorithms for BNs permit model selection through the use of typical criteria such as the Bayesian Information Criterion \citep[BIC, see][]{schwarz1978estimating} or the Integrated Completed Likelihood \citep[ICL, see][]{biernacki2000assessing}. 

The exact BP general algorithm makes it possible to easily derive forward and backward recursions for HMMs for any kind of evidence. Moreover the BN unifying framework permits simple extensions of the aforementioned change-point HMMs to more complex and realistic models accounting for intricate dependencies between the variables or data. As an example, a straightforward generalization of the aforementioned segment- and level-based models is to account simultaneously for both hidden segments and levels. More specifically, let $S_i$ and $L_i$ be the segment and level, respectively, of observation $X_i$. In this specification, the current level $L_i$ not only depends on $L_{i-1}$, but also on both the hidden states $S_i$ and $S_{i-1}$. 

Another useful extension is a model with joint segmentation of multiple samples. In a two-sample situation, there are two sets of distinct observations $X_1,\ldots,X_n$ and $Y_1,\ldots,Y_n$, one for each sample. Let $U_n,\ldots,U_n$ be the hidden variables corresponding to the segments of the first sample and $V_1,\ldots,V_n$ the variables corresponding to the segments of the second sample. The model assumes that $U_1,\ldots,U_n$ and $V_1,\ldots,V_n$ both depend on a common segmentation $S_1,\ldots,S_n$. An advantage of this model is that the joint segmentation of both samples is not independent and can identify common features of both samples.

\bibliographystyle{plainnat}
\bibliography{cp}

\begin{thebibliography}{49}
\providecommand{\natexlab}[1]{#1}
\providecommand{\url}[1]{\texttt{#1}}
\expandafter\ifx\csname urlstyle\endcsname\relax
  \providecommand{\doi}[1]{doi: #1}\else
  \providecommand{\doi}{doi: \begingroup \urlstyle{rm}\Url}\fi

\bibitem[Andersson et~al.(2008)Andersson, Bruder, Piotrowski, Menzel, Nord,
  Sandgren, Hvidsten, de~St{\aa}hl, Dumanski, and Komorowski]{andersson08}
R.~Andersson, C.E.G. Bruder, A.~Piotrowski, U.~Menzel, H.~Nord, J.~Sandgren,
  T.R. Hvidsten, T.D. de~St{\aa}hl, J.P. Dumanski, and J.~Komorowski.
\newblock A segmental maximum a posteriori approach to genome-wide copy number
  profiling.
\newblock \emph{Bioinformatics}, 24\penalty0 (6):\penalty0 751--758, 2008.

\bibitem[Aston and Martin(2007)]{aston07}
J.A.D. Aston and D.E.K. Martin.
\newblock Distributions associated with general runs and patterns in hidden
  {M}arkov models.
\newblock \emph{The Annals of Applied Statistics}, pages 585--611, 2007.

\bibitem[Biernacki et~al.(2000)Biernacki, Celeux, and
  Govaert]{biernacki2000assessing}
C.~Biernacki, G.~Celeux, and G.~Govaert.
\newblock Assessing a mixture model for clustering with the integrated
  completed likelihood.
\newblock \emph{Pattern Analysis and Machine Intelligence, IEEE Transactions
  on}, 22\penalty0 (7):\penalty0 719--725, 2000.

\bibitem[Bilmes(1998)]{bilmes98}
J.A. Bilmes.
\newblock A gentle tutorial of the {EM} algorithm and its application to
  parameter estimation for {G}aussian mixture and hidden {M}arkov models.
\newblock \emph{International Computer Science Institute}, 4:\penalty0 126,
  1998.

\bibitem[Capp{\'e} et~al.(2005)Capp{\'e}, Moulines, and Ryd{\'e}n]{cappe05}
O.~Capp{\'e}, E.~Moulines, and T.~Ryd{\'e}n.
\newblock \emph{Inference in hidden {M}arkov models}.
\newblock Springer Verlag, 2005.

\bibitem[Carlin et~al.(1992)Carlin, Gelfand, and Smith]{carlin92}
B.P. Carlin, A.E. Gelfand, and A.F.M. Smith.
\newblock Hierarchical {B}ayesian analysis of changepoint problems.
\newblock \emph{Applied statistics}, pages 389--405, 1992.

\bibitem[Chib(1998)]{chib98}
S.~Chib.
\newblock Estimation and comparison of multiple change-point models.
\newblock \emph{Journal of econometrics}, 86\penalty0 (2):\penalty0 221--241,
  1998.

\bibitem[Cho and Park(2003)]{cho03}
S.B. Cho and H.J. Park.
\newblock Efficient anomaly detection by modeling privilege flows using hidden
  {M}arkov model.
\newblock \emph{Computers \& Security}, 22\penalty0 (1):\penalty0 45--55, 2003.

\bibitem[Chopin and Pelgrin(2004)]{chopin04}
N.~Chopin and F.~Pelgrin.
\newblock Bayesian inference and state number determination for hidden {M}arkov
  models: an application to the information content of the yield curve about
  inflation.
\newblock \emph{Journal of Econometrics}, 123\penalty0 (2):\penalty0 327--344,
  2004.

\bibitem[Colella et~al.(2007)Colella, Yau, Taylor, Mirza, Butler, Clouston,
  Bassett, Seller, Holmes, and Ragoussis]{colella07}
S.~Colella, C.~Yau, J.M. Taylor, G.~Mirza, H.~Butler, P.~Clouston, A.S.
  Bassett, A.~Seller, C.C. Holmes, and J.~Ragoussis.
\newblock Quanti{SNP}: an {O}bjective {B}ayes {H}idden-{M}arkov {M}odel to
  detect and accurately map copy number variation using {SNP} genotyping data.
\newblock \emph{Nucleic Acids Research}, 35\penalty0 (6):\penalty0 2013, 2007.

\bibitem[Dempster et~al.(1977)Dempster, Laird, and Rubin]{dempster77}
A.P. Dempster, N.M. Laird, and D.B. Rubin.
\newblock Maximum likelihood from incomplete data via the {EM} algorithm.
\newblock \emph{Journal of the Royal Statistical Society. Series B
  (Methodological)}, pages 1--38, 1977.

\bibitem[Durbin et~al.(1998)Durbin, Eddy, Krogh, and Mitchison]{durbin98}
R.~Durbin, S.~Eddy, A.~Krogh, and G.~Mitchison.
\newblock \emph{Biological sequence analysis: probabilistic models of proteins
  and nucleic acids}.
\newblock Cambridge Univ Pr, 1998.

\bibitem[Eilers and De~Menezes(2005)]{eilers05}
P.H.C. Eilers and R.X. De~Menezes.
\newblock Quantile smoothing of array {CGH} data.
\newblock \emph{Bioinformatics}, 21\penalty0 (7):\penalty0 1146--1153, 2005.

\bibitem[Fearnhead and Clifford(2003)]{fearnhead03}
P.~Fearnhead and P.~Clifford.
\newblock On-line inference for hidden {M}arkov models via particle filters.
\newblock \emph{Journal of the Royal Statistical Society: Series B (Statistical
  Methodology)}, 65\penalty0 (4):\penalty0 887--899, 2003.

\bibitem[Fridlyand et~al.(2004)Fridlyand, Snijders, Pinkel, Albertson, and
  Jain]{fridlyand04}
J.~Fridlyand, A.M. Snijders, D.~Pinkel, D.G. Albertson, and A.N. Jain.
\newblock Hidden {M}arkov models approach to the analysis of array {CGH} data.
\newblock \emph{Journal of Multivariate Analysis}, 90\penalty0 (1):\penalty0
  132--153, 2004.

\bibitem[Ge and Smyth(2000)]{ge00}
X.~Ge and P.~Smyth.
\newblock Deformable {M}arkov model templates for time-series pattern matching.
\newblock In \emph{Proceedings of the sixth ACM SIGKDD international conference
  on Knowledge discovery and data mining}, pages 81--90. ACM, 2000.

\bibitem[Green(1995)]{green95}
P.J. Green.
\newblock Reversible jump {M}arkov chain {M}onte {C}arlo computation and
  {B}ayesian model determination.
\newblock \emph{Biometrika}, 82\penalty0 (4):\penalty0 711--732, 1995.

\bibitem[Gu{\'e}don(2007)]{guedon08}
Y.~Gu{\'e}don.
\newblock Exploring the state sequence space for hidden {M}arkov and
  semi-{M}arkov chains.
\newblock \emph{Computational Statistics \& Data Analysis}, 51\penalty0
  (5):\penalty0 2379--2409, 2007.

\bibitem[Guha et~al.(2008)Guha, Li, and Neuberg]{guha08}
S.~Guha, Y.~Li, and D.~Neuberg.
\newblock Bayesian hidden {M}arkov modeling of array {CGH} data.
\newblock \emph{Journal of the American Statistical Association}, 103\penalty0
  (482):\penalty0 485--497, 2008.

\bibitem[Hartigan and Wong(1979)]{hartigan79}
J.A. Hartigan and M.A. Wong.
\newblock Algorithm {AS} 136: {A} {K}-means clustering algorithm.
\newblock \emph{Journal of the Royal Statistical Society. Series C (Applied
  Statistics)}, 28\penalty0 (1):\penalty0 100--108, 1979.

\bibitem[Hsu et~al.(2005)Hsu, Self, Grove, Randolph, Wang, Delrow, Loo, and
  Porter]{hsu05}
LI~Hsu, S.G. Self, D.~Grove, T.~Randolph, K.~Wang, J.J. Delrow, L.~Loo, and
  P.~Porter.
\newblock Denoising array-based comparative genomic hybridization data using
  wavelets.
\newblock \emph{Biostatistics}, 6\penalty0 (2):\penalty0 211--226, 2005.

\bibitem[Hughes et~al.(1999)Hughes, Guttorp, and Charles]{hughes99}
J.P. Hughes, P.~Guttorp, and S.P. Charles.
\newblock A non-homogeneous hidden {M}arkov model for precipitation occurrence.
\newblock \emph{Journal of the Royal Statistical Society: Series C (Applied
  Statistics)}, 48\penalty0 (1):\penalty0 15--30, 1999.

\bibitem[Hup{\'e} et~al.(2004)Hup{\'e}, Stransky, Thiery, Radvanyi, and
  Barillot]{hupe04}
P.~Hup{\'e}, N.~Stransky, J.P. Thiery, F.~Radvanyi, and E.~Barillot.
\newblock Analysis of array {CGH} data: from signal ratio to gain and loss of
  {DNA} regions.
\newblock \emph{Bioinformatics}, 20\penalty0 (18):\penalty0 3413--3422, 2004.

\bibitem[Kimber and Wilcox(1997)]{kimber97}
D.~Kimber and L.~Wilcox.
\newblock Acoustic segmentation for audio browsers.
\newblock \emph{Computing Science and Statistics}, pages 295--304, 1997.

\bibitem[Koller and Friedman(2009)]{koller2009probabilistic}
D.~Koller and N.~Friedman.
\newblock \emph{Probabilistic graphical models: principles and techniques}.
\newblock The MIT Press, 2009.

\bibitem[Lai et~al.(2005)Lai, Johnson, Kucherlapati, and Park]{lai05}
W.R. Lai, M.D. Johnson, R.~Kucherlapati, and P.J. Park.
\newblock Comparative analysis of algorithms for identifying amplifications and
  deletions in array {CGH} data.
\newblock \emph{Bioinformatics}, 21\penalty0 (19):\penalty0 3763, 2005.

\bibitem[Lavielle(2005)]{lavielle05}
M.~Lavielle.
\newblock Using penalized contrasts for the change-point problem.
\newblock \emph{Signal Processing}, 85\penalty0 (8):\penalty0 1501--1510, 2005.

\bibitem[Liu et~al.(2010)Liu, Li, Schulz, Chen, and Tuck]{liu10}
Z.~Liu, A.~Li, V.~Schulz, M.~Chen, and D.~Tuck.
\newblock Mix{HMM}: inferring copy number variation and allelic imbalance using
  {SNP} arrays and tumor samples mixed with stromal cells.
\newblock \emph{PloS one}, 5\penalty0 (6):\penalty0 e10909, 2010.

\bibitem[Luong et~al.(2012)Luong, Rozenholc, and Nuel]{luong12}
T.M. Luong, Y.~Rozenholc, and G.~Nuel.
\newblock Fast estimation of posterior probabilities in change-point models
  through a constrained hidden {M}arkov model.
\newblock \emph{Arxiv preprint arXiv:1203.4394}, 2012.

\bibitem[Marioni et~al.(2006)Marioni, Thorne, and Tavar{\'e}]{marioni06}
JC~Marioni, NP~Thorne, and S.~Tavar{\'e}.
\newblock Bio{HMM}: a heterogeneous hidden {M}arkov model for segmenting array
  {CGH} data.
\newblock \emph{Bioinformatics}, 22\penalty0 (9):\penalty0 1144--1146, 2006.

\bibitem[Nefian and Hayes~III(1998)]{nefian98}
A.V. Nefian and M.H. Hayes~III.
\newblock Hidden {M}arkov models for face recognition.
\newblock In \emph{Acoustics, Speech and Signal Processing, 1998. Proceedings
  of the 1998 IEEE International Conference on}, volume~5, pages 2721--2724.
  IEEE, 1998.

\bibitem[Olshen et~al.(2004)Olshen, Venkatraman, Lucito, and Wigler]{olshen04}
A.B. Olshen, ES~Venkatraman, R.~Lucito, and M.~Wigler.
\newblock Circular binary segmentation for the analysis of array-based {DNA}
  copy number data.
\newblock \emph{Biostatistics}, 5\penalty0 (4):\penalty0 557--572, 2004.

\bibitem[Picard et~al.(2005)Picard, Robin, Lavielle, Vaisse, and
  Daudin]{picard05}
F.~Picard, S.~Robin, M.~Lavielle, C.~Vaisse, and J.J. Daudin.
\newblock A statistical approach for array {CGH} data analysis.
\newblock \emph{BMC Bioinformatics}, 6\penalty0 (1):\penalty0 27, 2005.

\bibitem[Pinkel et~al.(1998)Pinkel, Segraves, Sudar, Clark, Poole, Kowbel,
  Collins, Kuo, Chen, Zhai, et~al.]{pinkel98}
D.~Pinkel, R.~Segraves, D.~Sudar, S.~Clark, I.~Poole, D.~Kowbel, C.~Collins,
  W.L. Kuo, C.~Chen, Y.~Zhai, et~al.
\newblock High resolution analysis of {DNA} copy number variation using
  comparative genomic hybridization to microarrays.
\newblock \emph{Nature Genetics}, 20:\penalty0 207--211, 1998.

\bibitem[Rabiner(1989)]{rabiner89}
L.R. Rabiner.
\newblock A tutorial on hidden {M}arkov models and selected applications in
  speech recognition.
\newblock \emph{Proceedings of the IEEE}, 77\penalty0 (2):\penalty0 257--286,
  1989.

\bibitem[Raftery and Akman(1986)]{raftery86}
AE~Raftery and VE~Akman.
\newblock Bayesian analysis of a {P}oisson process with a change-point.
\newblock \emph{Biometrika}, pages 85--89, 1986.

\bibitem[Rigaill et~al.(2011)Rigaill, Lebarbier, and Robin]{rigaill11}
G.~Rigaill, E.~Lebarbier, and S.~Robin.
\newblock Exact posterior distributions and model selection criteria for
  multiple change-point detection problems.
\newblock \emph{Statistics and Computing}, pages 1--13, 2011.

\bibitem[Rueda and D{\'\i}az-Uriarte(2007)]{rueda07}
O.M. Rueda and R.~D{\'\i}az-Uriarte.
\newblock Flexible and accurate detection of genomic copy-number changes from
  a{CGH}.
\newblock \emph{PLoS Computational Biology}, 3\penalty0 (6):\penalty0 e122,
  2007.

\bibitem[Schwarz(1978)]{schwarz1978estimating}
G.~Schwarz.
\newblock Estimating the dimension of a model.
\newblock \emph{The annals of statistics}, 6\penalty0 (2):\penalty0 461--464,
  1978.

\bibitem[Scott(2002)]{scott02}
S.L. Scott.
\newblock Bayesian methods for hidden {M}arkov models.
\newblock \emph{Journal of the American Statistical Association}, 97\penalty0
  (457):\penalty0 337--351, 2002.

\bibitem[Shah et~al.(2009)Shah, Cheung, Johnson, Alain, Gascoyne, Horsman, Ng,
  Murphy, et~al.]{shah09}
S.P. Shah, K.~Cheung, N.A. Johnson, G.~Alain, R.D. Gascoyne, D.E. Horsman, R.T.
  Ng, K.P. Murphy, et~al.
\newblock Model-based clustering of array {CGH} data.
\newblock \emph{Bioinformatics}, 25\penalty0 (12):\penalty0 i30, 2009.

\bibitem[Snijders et~al.(2001)Snijders, Nowak, Segraves, Blackwood, Brown,
  Conroy, Hamilton, Hindle, Huey, Kimura, et~al.]{snijders01}
A.M. Snijders, N.~Nowak, R.~Segraves, S.~Blackwood, N.~Brown, J.~Conroy,
  G.~Hamilton, A.K. Hindle, B.~Huey, K.~Kimura, et~al.
\newblock Assembly of microarrays for genome-wide measurement of {DNA} copy
  number by {CGH}.
\newblock \emph{Nature Genetics}, 29:\penalty0 263--264, 2001.

\bibitem[Stjernqvist et~al.(2007)Stjernqvist, Ryd{\'e}n, Sk{\"o}ld, and
  Staaf]{stjernqvist07}
S.~Stjernqvist, T.~Ryd{\'e}n, M.~Sk{\"o}ld, and J.~Staaf.
\newblock Continuous-index hidden {M}arkov modelling of array {CGH} copy number
  data.
\newblock \emph{Bioinformatics}, 23\penalty0 (8):\penalty0 1006--1014, 2007.

\bibitem[Venkatraman and Olshen(2007)]{venkatraman07}
ES~Venkatraman and A.B. Olshen.
\newblock A faster circular binary segmentation algorithm for the analysis of
  array {CGH} data.
\newblock \emph{Bioinformatics}, 23\penalty0 (6):\penalty0 657--663, 2007.

\bibitem[Viterbi(1967)]{viterbi67}
A.~Viterbi.
\newblock Error bounds for convolutional codes and an asymptotically optimum
  decoding algorithm.
\newblock \emph{Information Theory, IEEE Transactions on}, 13\penalty0
  (2):\penalty0 260--269, 1967.

\bibitem[Wang et~al.(2007)Wang, Li, Hadley, Liu, Glessner, Grant, Hakonarson,
  and Bucan]{wang07}
K.~Wang, M.~Li, D.~Hadley, R.~Liu, J.~Glessner, S.F.A. Grant, H.~Hakonarson,
  and M.~Bucan.
\newblock Penn{CNV}: an integrated hidden {M}arkov model designed for
  high-resolution copy number variation detection in whole-genome {SNP}
  genotyping data.
\newblock \emph{Genome Research}, 17\penalty0 (11):\penalty0 1665--1674, 2007.

\bibitem[Willenbrock and Fridlyand(2005)]{willenbrock05}
H.~Willenbrock and J.~Fridlyand.
\newblock A comparison study: applying segmentation to array {CGH} data for
  downstream analyses.
\newblock \emph{Bioinformatics}, 21\penalty0 (22):\penalty0 4084--4091, 2005.

\bibitem[Zhang and Siegmund(2007)]{zhang07}
N.R. Zhang and D.O. Siegmund.
\newblock A modified {B}ayes information criterion with applications to the
  analysis of comparative genomic hybridization data.
\newblock \emph{Biometrics}, 63\penalty0 (1):\penalty0 22--32, 2007.

\bibitem[Zhang et~al.(2001)Zhang, Brady, and Smith]{zhang01}
Y.~Zhang, M.~Brady, and S.~Smith.
\newblock Segmentation of brain {MR} images through a hidden {M}arkov random
  field model and the expectation-maximization algorithm.
\newblock \emph{Medical Imaging, IEEE Transactions on}, 20\penalty0
  (1):\penalty0 45--57, 2001.

\end{thebibliography}

\appendix

\section{Sample R code}

The following R code presents a toy example for data simulated from the Poisson distribution, with $n=100$ observations and $J=3$ different segments, with change-points after the observations $25$ and $75$. The previously described algorithms calculate the probability of an observation being in a hidden state (marginal) and being a change-point (cp).

\newpage
The code for the \emph{level-based} models:
\begin{lstlisting}
n=100; # not usable for large n due to underflow issue 
(logscale version needed for larger n)
L=3; # number of levels
lambda=c(4.5,3.0,7.0); # parameters for the three level
eta=0.03; # transition parameter
# transition matrix
pi=matrix(eta/(L-1),ncol=L,nrow=L);
diag(pi)=1.0-eta;

# generate the data according to the model
set.seed(42)
s=rep(NA,n); s[1]=1;
for (i in 2:n) s[i]=sample(1:L,size=1,prob=pi[s[i-1],]);
x=rpois(n,lambda[s]);
refcp=which(diff(s)!=0); # reference change-point location

# forward recursion
F=matrix(rep(NA,n*L),ncol=n);
F[,1]=0.0; F[1,1]=1.0*dpois(x[1],lambda[1]);
for (i in 2:n) for (l in 1:L) 
 F[l,i]=sum(F[,i-1]*pi[,l]*dpois(x[i],lambda[l]));

# backward recursion
B=matrix(rep(NA,n*L),ncol=n); B[,n]=1.0; 
for (i in seq(n,2,by=-1)) for (l in 1:l)  
 B[l,i-1]=sum(pi[l,]*dpois(x[i],lambda)*B[,i]);

# probability of the evidence
pevidence=F[1,1]*B[1,1];

# consistency check TRUE if OK
# NOT RUN: prod(apply(F*B,2,sum)==pevidence)==1

# marginal distribution
marginal=F*B/pevidence;

# posterior distribution of change-points
cp=rep(0,n);
for (i in 2:n) for (l in 1:L) 
  cp[i-1]=cp[i-1]+sum(F[l,i-1]*pi[l,-l]*dpois(x[i],lambda[-l])*B[-l,i])/pevidence;

par(mfrow=c(1,2));
plot(x,main="posterior marginal distribution of segment index",pch=16,col="blue"); 
abline(v=refcp,col="blue",lty=4); points(lambda[s],t="l",col="blue",lty=4);
for (j in 1:L) points(max(x)*marginal[j,],col=j,lty=j,t="l",lwd=2);
plot(x,main="posterior change-point distribution",pch=16,col="blue"); 
abline(v=refcp,col="blue",lty=4); 
points(lambda[s],t="l",col="blue",lty=4);
points(max(x)*cp,t="l",lwd=2);
\end{lstlisting}

\newpage
The code for the \emph{segment-based} models:
\begin{lstlisting}
n=100; # not usable for large n due to underflow issue 
#(logscale version needed for larger n)
K=3; # number of segments
lambda=c(4.5,3.0,7.0); # parameters for the three segments
refcp=c(25,75); # ref change point locations
s=c(rep(1,25),rep(2,50),rep(3,25)); # true segment index

# generate the data according to the model
set.seed(42)
x=rpois(n,lambda[s]);

# forward recursion
F=matrix(rep(NA,n*K),ncol=n);
F[,1]=0.0; F[1,1]=1.0*dpois(x[1],lambda[1]);
for (i in 2:n) {
  F[2:K,i]=(0.5*F[1:(K-1),i-1]+0.5*F[2:K,i-1])*dpois(x[i],lambda[2:K]);
  F[1,i]=0.5*F[1,i-1]*dpois(x[i],lambda[1]);
};

# backward recursion
B=matrix(rep(NA,n*K),ncol=n); B[,n]=0.0; B[K,n]=1.0;
for (i in seq(n,2,by=-1)) {
  B[1:(K-1),i-1]=0.5*B[1:(K-1),i]*
    dpois(x[i],lambda[1:(K-1)])+0.5*B[2:K,i]*dpois(x[i],lambda[2:K]);
  B[K,i-1]=0.5*B[K,i]*dpois(x[i],lambda[K]);
};

# probability of the evidence
pevidence=F[1,1]*B[1,1];

# consistency check TRUE if OK
# NOT RUN: prod(apply(F*B,2,sum)==pevidence)==1

# marginal distribution
marginal=F*B/pevidence;

# posterior distribution of change-points
cp=matrix(0,nrow=K-1,ncol=n);
for (k in 1:(K-1)) cp[k,1:(n-1)]=F[k,1:(n-1)]/pevidence*0.5*B[k+1,2:n]*
  dpois(x[2:n],lambda=lambda[k+1]);

par(mfrow=c(1,2));
plot(x,main="posterior marginal distribution of segment index",pch=16,col="blue"); 
abline(v=refcp,col="blue",lty=4); points(lambda[s],t="l",col="blue",lty=4);
for (k in 1:K) points(max(x)*marginal[k,],col=k,lty=k,t="l",lwd=2);
plot(x,main="posterior change-point distribution",pch=16,col="blue"); 
abline(v=refcp,col="blue",lty=4); 
points(lambda[s],t="l",col="blue",lty=4);
for (k in 1:(K-1)) points(max(x)*cp[k,],col=k,lty=k,t="l",lwd=2);
\end{lstlisting}

\newpage
\section{Sample datasets}
 
The number of annual accidents from $1851-1962$ in the British coal mining disaster data set \citep{carlin92}:
\begin{lstlisting}
accidents <- c(4, 5, 4, 1, 0, 4, 3, 4, 0, 6, 3, 3, 4, 0, 2, 6, 3, 3, 5, 
4, 5, 3, 1, 4, 4, 1, 5, 5, 3, 4, 2, 5, 2, 2, 3, 4, 2, 1, 3, 2, 2, 1, 1, 
1, 1, 3, 0, 0, 1, 0, 1, 1, 0, 0, 3, 1, 0, 3, 2, 2, 0, 1, 1, 1, 0, 1, 0, 
1, 0, 0, 0, 2, 1, 0, 0, 0, 1, 1, 0, 2, 3, 3, 1, 1, 2, 1, 1, 1, 1, 2, 4, 
2, 0, 0, 0, 1, 4, 0, 0, 0, 1, 0, 0, 0, 0, 0, 1, 0, 0, 1, 0, 1)\end{lstlisting}

The log-reference ratio (LRR) of chromosome~10 of cell line BT474 in breast cancer tumor \citep{snijders01}:
\begin{lstlisting}
LRR <- c(0.3362, 0.2793, 0.3742, 0.4424, 0.1238, 0.3590, 0.1655, 0.3552, 
0.1504, 0.2983, 0.3173, 0.1428, 0.1276, 0.8824, 0.3438, 0.2642, 0.1390, 
0.3211, 0.4235, 0.2338, 0.2983, 0.2376, 0.2452, 0.3362, 0.3400, 0.3248, 
0.3666, 0.4766, 0.7193, 0.3135, 0.1883, 0.1466, 0.3211, 0.3779, 0.2376, 
0.1655, 0.3173, 0.0252, 0.2528, 0.3324, 0.2528, 0.2755, 0.2945, 0.1997, 
0.2376, 0.1807, 0.6510, 0.3552, 0.1883, 0.1655, 0.2680, 0.2642, 0.2680, 
0.1883, 0.1997, 0.1693, 0.3362, 0.2111, 0.2755, 0.2680, 0.2680, 0.2983, 
0.1162, 0.3476, 0.3817, 0.8331, 0.3097, 0.2376, 0.0556, 0.0707, 0.1655, 
0.1769, 0.2111, -0.0696, 0.2149, 0.0214, 0.1238, -0.7333, -0.7409, 
0.0366, 0.0897, 0.1769, 0.0404, 0.0935, 0.1466, 0.6169, 0.1921, 0.2300, 
0.3476, 0.1921, 0.6055, 0.0935, 0.0518, -0.0582, -0.6423, 0.4083, 
-0.7864, -0.8964, -0.6120, -0.7258, -0.6347, -0.7940, -0.6120, -0.6006,
-0.1986, -0.6044, -0.4754, -0.6234, -0.7030, -0.5323, 0.3097, -0.8395, 
-1.0064, -0.8206, -0.8851, -0.0506, -0.7409, -0.7296, -1.5526, -0.1455)
\end{lstlisting}

\end{document}